%
%
%
\documentclass[prd,showpacs,preprintnumbers,groupedaddress]{revtex4-1} 


\usepackage{amsthm}
\theoremstyle{plain}
\newtheorem{theorem}{Theorem}
\newtheorem{definition}{Definition}
\newtheorem*{lemma}{Lemma}

\newtheorem{proposition}{Proposition}
\usepackage{cases}
\usepackage{color}
\usepackage{comment}
\usepackage{ulem}
\usepackage{amsfonts}

%
\newcommand{\llaabel}[1]{\label{#1}}

\begin{document}


\title{Photon surfaces in spherically, planar and hyperbolically symmetric spacetimes in D-dimensions: Sonic point/photon sphere correspondence}


\author{Yasutaka Koga}
\affiliation{Department of Physics, Rikkyo University, Toshima, Tokyo 171-8501, Japan}


\date{\today}

\begin{abstract}
Sonic point/photon sphere (SP/PS) correspondence is a theoretical phenomenon which appears in fluid dynamics on curved spacetime and its existence has been recently proved in quite wide situations as theorems.
The theorems state that a sonic point (SP) of radiation fluid flow must be on an unstable photon sphere (PS) when the fluid flows radially or rotationally on an equatorial plane in spherically symmetric spacetime of arbitrary dimensions.
In this paper, we investigate SP/PS correspondence in spherically, planar and hyperbolically symmetric spacetime.
As the corresponding objects of photon spheres in nonspherically symmetric spacetime, we consider photon surfaces introduced by Claudel {\it et al.}~(2001) in the spacetime.
After formulating the problem of radial fluid flows, we prove there always exists a correspondence between the sonic points and the photon surfaces, namely, SP/PS correspondence in nonspherically symmetric spacetime.
\end{abstract}

\pacs{04.20.-q, 04.40.Nr, 98.35.Mp}

\maketitle


\tableofcontents

\section{Introduction}
\llaabel{sec:introduction}
A photon sphere is a sphere of spacetime on which null geodesics take circular orbits.
In astrophysical cases, black holes usually have photon spheres near their horizons.
A photon sphere has been widely studied in its various aspects;
For optical observations of black holes through background light emission, the photon sphere determines the size of the black hole shadow.
In the case of the Schwarzschild black hole for example, we can see their relation from the calculation  by Synge~\cite{synge};
Properties of gravitational waves from black holes are also closely related to the photon sphere.
It is known that the frequencies of quasinormal modes are determined by
the parameters of null geodesic motions on and near the photon sphere in
various situations~\cite{cardoso}~\cite{hod}.
The nature of photon sphere itself and the generalization to nonspherically symmetric spacetimes were also investigated by Claudel, Virbhadra and Ellis~\cite{claudel} and the citations.
\par
Accretion of fluid onto objects is a basic problem in astrophysics and has been investigated as fluid dynamics on curved spacetime in general relativistic contexts.
The accretion problem has been widely studied in the cases of Newtonian gravity~\cite{bondi}, Schwarzschild spacetime~\cite{michel}, Schwarzschild (anti-)de Sitter spacetime~\cite{mach} and generic spherically symmetric spacetime~\cite{chaverra}.
One of the interesting features is the existence of transonic flow and its sonic point (or critical point), that is, flow which transits from subsonic to supersonic state and its transition point.
A sonic point generally appears in accretion problems and plays a key role in the analysis.
\par
{\it Sonic point/photon sphere(SP/PS) correspondence} is a theoretical phenomenon which appears in fluid dynamics on curved spacetime.
It refers to a coincidence of radii of a sonic point (SP) of radiation fluid flow and a photon sphere (PS).
We can easily observe that SP/PS correspondence holds in Schwarzschild (anti) de-Sitter spacetime from the work by Mach {\it et al.}~\cite{mach} with the knowledge that the spacetime has a photon sphere on the radius $3M$.
Surprisingly, it was recently found that SP/PS correspondence always holds in arbitrary static spherically symmetric spacetime of arbitrary dimensions for spherical flow~\cite{koga} and rotational flow~\cite{koga2}.
In addition to its applications to observations in astrophysics, SP/PS correspondence has interesting implications concerning fundamental mechanism of fluid dynamics on curved spacetime.
This is because a sonic point of radiation fluid flow can be regarded as a point characterized by macroscopic behavior of a system of photons and a photon sphere is characterized by microscopic behavior of a single photon.
The two physics can be closely related at a deeper level in this correspondence.
\par
The existence of SP/PS correspondence in quite wide situations~\cite{koga}~\cite{koga2} strongly suggests that there exists some physical reason for it.
To reveal the reason, it is necessary to know what properties of a photon sphere are needed for the correspondence to hold; Circularity of null orbits, positivity and constancy of the intrinsic curvature or else.
Claudel {\it et al.}~\cite{claudel} introduced a geometrical concept, {\it photon surface}, which inherits only the local geometrical property of photon spheres called {\it umbilicity} but need not have spherical symmetry.
In this paper, we see there exists the correspondence between sonic points and {\it photon surfaces} in nonspherical spacetime like as SP/PS correspondence.
The result leads us to the conclusion that {\it SP/PS correspondence is caused by the umbilicity of a photon sphere}.
\par
We consider static spacetime of spherical, planar and hyperbolic symmetry given by the metric,
\begin{equation}
\llaabel{eq:metric}
ds^2=-f(r)dt^2+g(r)dr^2+r^2\left(d\chi^2+s^2(\chi)d\Omega^2_{D-3}\right),
\end{equation}
where $f(r)>0$, $g(r)>0$ and the function $s(\chi)$ is given by
\begin{equation}
\llaabel{eq:maxsym2space}
s(\chi)=
	\left\{
	\begin{array}{ll}
	\sin \chi& (spherical)\\
	\chi& (planar)\\
	\sinh \chi& (hyperbolic)
	\end{array}
	\right.
\end{equation}
in the spherically, planar and hyperbolically symmetric case, respectively.
$d\Omega^2_{D-3}$ is a unit $(D-3)$-sphere,
\begin{equation}
d\Omega_{D-3}^2=d\theta_1^2+\cdots+\sin^2\theta_1\cdots\sin^2\theta_{D-5}d\theta_{D-4}^2
+\sin^2\theta_1\cdots\sin^2\theta_{D-4}d\theta_{D-3}^2.
\end{equation}
We investigate photon surfaces of $r=const.$ hypersurfaces, which we call constant-$r$ photon surfaces, in the spacetime.
(The photon surfaces in the spherical case are just photon spheres.)
Then, after formulating an accretion problem of radial fluid flows for general equation of state (EOS), we prove our main theorem, which states that there exists the correspondence between the sonic points of the radiation fluid flows and the photon surfaces:
\begin{theorem}
\llaabel{theorem:correspondence}
For any physical transonic flow of radiation fluid which is stationary and spherically, planar or hyperbolically symmetric on the spacetime~(\ref{eq:metric}), the radius of its sonic point coincides with that of (one of) the unstable constant-$r$ photon surface(s).
\end{theorem}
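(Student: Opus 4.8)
The plan is to extract two first integrals of the stationary radial flow and then to read off both the radius and the character of the sonic point from them. First I would write the four-velocity as $u^\mu=(u^t(r),u^r(r),0,\dots)$, set $u:=u^r$, and note that since the angular factor $s(\chi)$ in \ref{eq:metric} is independent of $r$, the volume element factorizes as $\sqrt{-g}\propto\sqrt{fg}\,r^{D-2}$ up to a factor depending only on $\chi$ and the $\theta_i$; this single observation lets the spherical, planar and hyperbolic cases be handled uniformly. Particle-number conservation $\nabla_\mu(nu^\mu)=0$ then integrates to $n\,u\,\sqrt{fg}\,r^{D-2}=\mathrm{const}$, and contracting $\nabla_\mu T^{\mu\nu}=0$ with the Killing vector $\partial_t$ gives the relativistic Bernoulli law $h\sqrt{f(1+gu^2)}=\mathrm{const}$, where $h=(\rho+p)/n$ and isentropy of the steady flow is used to write $dh/h=c_s^2\,dn/n$ with $c_s^2=dp/d\rho$. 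For a radiation fluid in $D$ dimensions, tracelessness of $T^{\mu\nu}$ fixes $p=\rho/(D-1)$ and hence yields the \emph{constant} sound speed $c_s^2=1/(D-1)$.

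Eliminating the thermodynamic variable turns the flow into a single equation $du/dr=N(r,u)/\mathcal D(r,u)$ in which $\mathcal D$ vanishes exactly where the proper radial speed relative to the static observer, $v^2:=gu^2/(1+gu^2)$, equals $c_s^2$; that is the sonic point. Requiring $N$ to vanish there as well, so that $du/dr$ stays finite at a smooth transonic crossing, I expect to obtain
\begin{equation}
c_s^2=\frac{rf'}{rf'+2(D-2)f},
\end{equation}
which for the radiation value $c_s^2=1/(D-1)$ collapses to $rf'(r_0)=2f(r_0)$. Independently, a short computation of the second fundamental form of an $r=\mathrm{const}$ hypersurface of \ref{eq:metric} shows it is umbilic --- i.e.\ a photon surface --- precisely when $rf'=2f$, equivalently when $r^2/f$ (the squared impact parameter of the associated circular null orbit) is stationary, a condition independent of $g$ and of $s(\chi)$. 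Hence the sonic radius coincides with a constant-$r$ photon surface, which is the radius half of the theorem.

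For the stability half I would package the $r$-dependence of the combined Bernoulli/flux relation into $\Lambda(r):=\bigl(r^2/f\bigr)^{(D-2)/(D-1)}$ --- the exponent is exactly what the radiation $c_s^2$ produces --- and its $v$-dependence into a function $\Psi(v)$ that diverges as $v\to 0^+$ and as $v\to 1^-$ and whose unique interior minimum lies at $v^2=c_s^2$, so that $(\ln\Psi)''>0$ at the sonic speed. Expanding the trajectory relation $\Psi(v)=\mathrm{const}\times\Lambda(r)$ about $(r_0,c_s)$ then shows that a real transonic branch through this point exists --- equivalently, that the critical point of the flow system is a saddle (the accretion and wind branches) rather than an elliptic center around which no monotonic transonic solution exists --- if and only if $(\ln\Lambda)''(r_0)>0$, hence (since $W:=r^2/f$ is critical at $r_0$, so $(\ln W)''$ and $W''$ share a sign there) if and only if $(r^2/f)''(r_0)>0$. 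But $(r^2/f)''(r_0)>0$ is precisely the statement that the null circular orbit at $r_0$, and hence the constant-$r$ photon surface, is \emph{unstable}. As $rf'=2f$ may have several roots, the physical flow selects one of them and it must be an unstable one, which is the assertion of Theorem \ref{theorem:correspondence}.

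The step I expect to be the main obstacle is making the saddle-versus-center dichotomy, and its translation into the second-order condition, genuinely rigorous rather than heuristic: one must check that the critical point of the flow system is non-degenerate, that the \emph{physical} transonic solution --- everywhere smooth, subluminal, with positive rest-mass density and pressure, and monotonic through the sonic point --- is exactly the separatrix of a saddle, and that at such a crossing $dv/dr$ is finite and non-zero, so that $\Lambda'(r_0)=0$ (and not merely $\Psi'(c_s)=0$) genuinely follows. Assembling the explicit forms of $N$, $\mathcal D$, $\Psi$ and $\Lambda$ and verifying these structural claims carefully is where essentially all the work lies; the remaining algebra linking $c_s^2=1/(D-1)$ to $rf'=2f$ and to the exponent $(D-2)/(D-1)$ is routine once the setup is in place.
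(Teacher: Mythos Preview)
Your proposal is correct and follows the same overall architecture as the paper: derive the two first integrals (number flux and Bernoulli), observe that the $s(\chi)$-dependence drops out so the spherical, planar and hyperbolic cases are handled at once, locate the sonic/critical radius for radiation fluid at $rf'=2f$ (equivalently $(fr^{-2})'=0$), and identify this with the umbilic condition for a constant-$r$ photon surface.

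The organization differs in two respects worth noting. First, the paper works on the $(r,n)$ phase space with the level function $F(r,n)=h^2(n)\bigl[f(r)+\mu^2 r^{-2(D-2)}n^{-2}\bigr]$ and treats general EOS first (its Theorem~\ref{theorem:sonic-critical}: sonic point of a physical transonic flow $=$ saddle critical point) before specializing to radiation; you work in $(r,v)$ and exploit the constancy of $c_s^2=1/(D-1)$ from the outset. Second, for the saddle-versus-center half, the paper computes the determinant of the linearized Hamiltonian flow via the formula $\det M_c\propto -\mathcal{F}'(r_c)$ imported from~\cite{koga} and then substitutes the radiation EOS, while your factorization $\Psi(v)=\mathrm{const}\times\Lambda(r)$ with $\Lambda$ a positive power of $r^2/f$ lets you read the saddle condition directly as $(r^2/f)''(r_0)>0\Leftrightarrow (fr^{-2})''(r_0)<0$, which is precisely the instability condition of Proposition~\ref{proposition:stability}. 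Your route is a bit more streamlined for the radiation case specifically; the paper's has the virtue of isolating the EOS-independent statement before plugging in $c_s^2=1/(D-1)$. The delicate step you flag---non-degeneracy and the identification of the physical transonic solution with the saddle separatrix---is exactly what the paper handles by the case split below Eq.~(\ref{eq:spcondition}), ruling out the non-critical case via the finite-gradient requirement and the extremum case because no level curve of $F$ passes through an extremum.
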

\par
In Sec.~\ref{sec:psf-arbitrary-dim}, we review the works by Claudel {\it et al.}~\cite{claudel} and Perlick~\cite{perlick} and find the condition for hypersurfaces to be photon surfaces.
Using the result, we investigate photon surfaces of constant radius in the spacetime~(\ref{eq:metric}) in Sec.~\ref{sec:psf-const-r}.
We define the stability of the photon surfaces of constant radius which is analogous to the stability of the photon sphere defined by Koga and Harada~\cite{koga} in Sec.~\ref{sec:stability}.
Then we formulate the accretion problem of radial fluid flow in Sec.~\ref{sec:accretion} and finally prove the correspondence between the sonic point of radiation fluid flow and the photon surfaces.
The conclusion is given in Sec.~\ref{sec:conclusion}.

\section{Photon surface of arbitrary dimensions}
\llaabel{sec:psf-arbitrary-dim}
A photon surface is a geometrical structure of spacetime first introduced by Claudel {\it et al.}~\cite{claudel}.
This is one of generalizations of a photon sphere and can be defined for any spacetime, $(M, {\bf g})$, even if the spacetime has no symmetries:
\begin{definition}[Photon surface]
\label{definition:photonsurface}
A photon surface of $(M, {\bf g})$ is an immersed, nowhere-spacelike
hypersurface $S$ of $(M, {\bf g})$ such that, for every point $p\in S$ and every null vector ${\bf k}\in T_pS$, there exists a null geodesic $\gamma : (-\epsilon,\epsilon) \to M$ of $(M, {\bf g})$ such that $\gamma(0) ={\bf k}, |\gamma|\subset S$.
\end{definition}
A photon sphere is a timelike hypersurface $\mathbb{R}\times S^2$ along which all the null geodesics take circular orbits.
A photon surface inherits not the symmetries and the global properties of photon sphere, such as the spatial topologies of the surface and the orbits on it, but only the local properties.
Claudel {\it et al.}~\cite{claudel} proved that the following theorem holds for a 3-dimensional timelike photon surface in $4$-dimensional spacetime (Theorem~2.2 in Sec.~2 in~\cite{claudel}):
\begin{theorem}[Equivalent conditions for photon surface]
\label{theorem:equivalent}
Let $S$ be a timelike hypersurface of $(M, {\bf g})$.
Let ${\bf n}$ be a unit normal
field to $S$ and let $h_{ab}$ be the induced metric on $S$. Let $\chi_{ab}$ be the second fundamental
form on $S$ and let $\sigma_{ab}$ be the tracefree part of $\chi_{ab}$.
Then the following are equivalent:
\begin{enumerate}
\renewcommand{\labelenumi}{\roman{enumi})}
\item $S$ is a photon surface;
\item $\chi_{ab}k^ak^b=0$\ $\forall$ null ${\bf k}\in T_pS\ \forall p\in S$;
\item $\sigma_{ab}=0$;
\item every affine null geodesic of $(S,{\bf h})$ is an affine null geodesic of $(M,{\bf g})$.
\end{enumerate}
\end{theorem}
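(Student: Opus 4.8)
The plan is to reduce all four equivalences to a single tool from submanifold geometry — the Gauss decomposition of the ambient covariant derivative — together with one linear-algebra lemma about tracefree tensors on a Lorentzian vector space. Since $S$ is timelike, its unit normal $\mathbf{n}$ is spacelike, the induced metric $h_{ab}$ is Lorentzian and nondegenerate, and for any fields $X,Y$ tangent to $S$ the ambient and intrinsic connections are related by
\begin{equation}
\nabla_X Y = D_X Y + \chi(X,Y)\,\mathbf{n},
\end{equation}
where $D$ is the Levi-Civita connection of $(S,\mathbf{h})$ and $\chi(X,Y)=\chi_{ab}X^aY^b$ is the scalar second fundamental form (with the sign convention fixing $\chi_{ab}$). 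I would first record this identity, observing that its tangential part is $D_XY$ and its one-dimensional normal part is $\chi(X,Y)\mathbf{n}$, so that the tangent-plus-normal decomposition of any ambient vector is unique. The entire argument then proceeds by reading off the normal component of geodesic equations.

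For the linear-algebra step I would prove (ii) $\Leftrightarrow$ (iii). Because $h$ is the restriction of $g$, every null $\mathbf{k}\in T_pS$ satisfies $h_{ab}k^ak^b=g_{ab}k^ak^b=0$, so the trace part of $\chi_{ab}$ drops out and $\chi_{ab}k^ak^b=\sigma_{ab}k^ak^b$ for all null tangent $\mathbf{k}$. Hence (ii) is equivalent to the statement that the symmetric, $h$-tracefree tensor $\sigma_{ab}$ vanishes on the entire null cone of $h$, and the key lemma is that such a tensor must vanish identically: a quadratic form whose zero set contains that of the nondegenerate Lorentzian form $h$ is a scalar multiple $\lambda h$, whereupon $h$-tracelessness forces $\lambda\,(\dim S)=0$, i.e. $\lambda=0$. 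I would establish the lemma concretely by evaluating $\sigma(\mathbf{k},\mathbf{k})=0$ on the null vectors $e_0\pm e_i$ and $e_0+\tfrac{1}{\sqrt2}(e_i+e_j)$ of an orthonormal frame, which successively annihilate the mixed components and pin $\sigma_{ii}=-\sigma_{00}$, after which the trace condition kills every component. This is where the hypotheses genuinely enter — the argument needs the Lorentzian (nowhere-spacelike) signature of $h$ and at least one spacelike tangent direction, with the off-diagonal relations requiring a second one — and I expect it to be the main obstacle of the proof.

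The geometric equivalences then follow by feeding geodesics into the Gauss formula. For (ii) $\Leftrightarrow$ (iv): an affine null geodesic $\gamma$ of $(S,\mathbf{h})$ has $D_{\dot\gamma}\dot\gamma=0$ and carries a null tangent $\dot\gamma$ preserved along $\gamma$, so $\nabla_{\dot\gamma}\dot\gamma=\chi(\dot\gamma,\dot\gamma)\,\mathbf{n}$; thus $\gamma$ is an ambient affine null geodesic exactly when $\chi(\dot\gamma,\dot\gamma)=0$, and since every null $\mathbf{k}\in T_pS$ arises as such an initial velocity, (iv) holds iff (ii) does. For (i) $\Leftrightarrow$ (ii), the chain (ii) $\Rightarrow$ (iv) $\Rightarrow$ (i) is immediate, because the ambient null geodesics produced by (iv) are precisely those demanded in Definition~\ref{definition:photonsurface}. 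Conversely, if $S$ is a photon surface, then for each $p$ and each null $\mathbf{k}\in T_pS$ the definition supplies an ambient null geodesic $\gamma$ with $\dot\gamma(0)=\mathbf{k}$ and image in $S$; its velocity remains tangent to $S$, so decomposing $0=\nabla_{\dot\gamma}\dot\gamma=D_{\dot\gamma}\dot\gamma+\chi(\dot\gamma,\dot\gamma)\mathbf{n}$ into its complementary tangential and normal parts forces $\chi(\dot\gamma,\dot\gamma)=0$, in particular $\chi_{ab}k^ak^b=0$ at $p$. As $p$ and $\mathbf{k}$ are arbitrary, this is exactly (ii), closing the cycle (i) $\Rightarrow$ (ii) $\Rightarrow$ (iv) $\Rightarrow$ (i) which, together with (ii) $\Leftrightarrow$ (iii), yields the full equivalence.
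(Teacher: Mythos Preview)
The paper does not actually prove this theorem; it is quoted verbatim as Theorem~2.2 of Claudel, Virbhadra and Ellis~\cite{claudel} and then combined with Perlick's Proposition~\ref{proposition:D-DimEquivalent} to obtain the working statement, Proposition~\ref{proposition:D-dimPhotonsurface}. So there is no in-paper argument to compare against.

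That said, your proof is correct and is essentially the classical one. The Gauss decomposition $\nabla_XY=D_XY+\chi(X,Y)\,\mathbf{n}$ is exactly the right tool: reading off the normal part of $\nabla_{\dot\gamma}\dot\gamma$ immediately gives the equivalences (i)$\Leftrightarrow$(ii)$\Leftrightarrow$(iv), and your null-cone lemma for (ii)$\Leftrightarrow$(iii) is the standard device. Two small remarks. First, your dimensional caveat is slightly overcautious: the lemma ``symmetric, $h$-tracefree, vanishing on the null cone of a Lorentzian $h$ $\Rightarrow$ zero'' already holds for $\dim S=2$, since the two null directions $e_0\pm e_1$ together with tracelessness kill everything; in the paper's setting $\dim S=3$ anyway, so there is no issue. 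Second, in the step (i)$\Rightarrow$(ii) you should say explicitly why the velocity of the ambient null geodesic lying in $S$ stays tangent to $S$ (it does, because the image is contained in $S$ on an open interval, so $\dot\gamma$ is the tangent to a curve in $S$); you use this implicitly when you invoke the tangent/normal splitting of $\nabla_{\dot\gamma}\dot\gamma$.
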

To find timelike photon surfaces of a given spacetime, it is easier to investigate whether a given timelike hypersuface $S$ is umbilic (i.e. the second fundamental form is pure-trace) everywhere on $S$ or not.
\par
Perlick~\cite{perlick} proved propositions similar to Theorem~\ref{theorem:equivalent} for a surface of arbitrary codimensions and spacetime of arbitrary dimensions.
The one of his propositions is for a Lorentzian manifold $M$ (Proposition~3 in Sec.~3 in~\cite{perlick}):
\begin{proposition}
\llaabel{proposition:D-DimEquivalent}
Let $\tilde{M}$ be a timelike submanifold with $2\le\dim(\tilde{M})\le\dim\left(M\right)$.
Then $\tilde{M}$ is totally umbilic if and only if every lightlike geodesic that starts tangent to $\tilde{M}$ remains within $\tilde{M}$ (for some parameter interval around the starting point).
\end{proposition}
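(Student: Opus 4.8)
The plan is to establish the equivalence through the second fundamental form (shape tensor) of $\tilde{M}$ together with the Gauss decomposition of the ambient covariant derivative. Write $\nabla$ for the Levi-Civita connection of $(M,\mathbf{g})$ and $\nabla^{\tilde{M}}$ for the induced connection on $\tilde{M}$. For vector fields $X,Y$ tangent to $\tilde{M}$ the Gauss formula reads $\nabla_X Y=\nabla^{\tilde{M}}_X Y+II(X,Y)$, where $II$ is the normal-bundle-valued symmetric shape tensor and the two summands are respectively tangent and normal to $\tilde{M}$. Total umbilicity is, by definition, the statement that $II(X,Y)=h(X,Y)\,\zeta$ for some normal field $\zeta$, where $h$ is the induced metric, which is Lorentzian since $\tilde{M}$ is timelike. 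First I would record the elementary but crucial observation that a curve $\gamma$ lying in $\tilde{M}$ is a geodesic of $M$ if and only if both $\nabla^{\tilde{M}}_{\dot\gamma}\dot\gamma=0$ and $II(\dot\gamma,\dot\gamma)=0$; in particular, a lightlike geodesic of $M$ confined to $\tilde{M}$ satisfies $II(\dot\gamma,\dot\gamma)=0$ along its length.

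For the forward direction I assume total umbilicity. Given $p\in\tilde{M}$ and a lightlike $k\in T_p\tilde{M}$, let $\gamma$ be the geodesic of $(\tilde{M},h)$ with $\gamma(0)=p$ and $\dot\gamma(0)=k$; it remains in $\tilde{M}$ by construction and stays lightlike because $\nabla^{\tilde{M}}$ is metric. Umbilicity gives $II(\dot\gamma,\dot\gamma)=h(\dot\gamma,\dot\gamma)\,\zeta=0$, so the Gauss formula yields $\nabla_{\dot\gamma}\dot\gamma=0$, i.e. $\gamma$ is a lightlike geodesic of $M$. By uniqueness of $M$-geodesics for the initial data $(p,k)$, this $\gamma$ is precisely the lightlike geodesic of $M$ through $p$ tangent to $k$, and it remains within $\tilde{M}$.

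For the converse I assume every lightlike geodesic of $M$ that starts tangent to $\tilde{M}$ stays in $\tilde{M}$. Fixing $p$ and an arbitrary null $k\in T_p\tilde{M}$, the lightlike geodesic of $M$ with data $(p,k)$ lies in $\tilde{M}$ by hypothesis, so the observation above gives $II(k,k)=0$ at $p$. Thus $II(k,k)=0$ for every null $k\in T_p\tilde{M}$, at every $p\in\tilde{M}$. It then remains to prove the purely algebraic fact that a symmetric bilinear form vanishing on the entire null cone of $h$ is proportional to $h$. Contracting $II$ with a fixed normal covector reduces this to a scalar symmetric form $B$ on the Lorentzian space $T_p\tilde{M}$ with $B(k,k)=0$ for all $h$-null $k$; evaluating $B$ on the null vectors $e_0\pm e_i$ built from an orthonormal frame forces $B(e_0,e_i)=0$ and $B(e_i,e_i)=-B(e_0,e_0)=:\lambda$, while evaluating on $e_0+\cos\theta\,e_i+\sin\theta\,e_j$ yields $\sin(2\theta)\,B(e_i,e_j)=0$ and hence kills the mixed spatial components, giving $B=\lambda h$. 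Since this holds for each normal direction and $\mathbf{g}$ is nondegenerate on the normal bundle, the proportionality factors assemble into a single normal field $\zeta$ with $II(X,Y)=h(X,Y)\,\zeta$, which is total umbilicity.

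The main obstacle I expect is this final algebraic lemma and, around it, the bookkeeping forced by $II$ being normal-bundle-valued rather than scalar: one must run the null-cone argument direction by direction in the normal bundle and then reconstitute $\zeta$ from nondegeneracy of the induced metric on the normal space. The argument must also be arranged to hold uniformly for all $\dim(\tilde{M})\ge 2$; the frame computation degenerates gracefully in the two-dimensional case, where the mixed-spatial step is vacuous. Everything else—the Gauss decomposition and the existence and uniqueness of geodesics with prescribed lightlike initial data—is standard.
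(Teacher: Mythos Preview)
Your argument is correct and is the standard route to this result. Note, however, that the paper does not supply its own proof of this proposition: it is quoted verbatim as Proposition~3 of Perlick~\cite{perlick}, and the paper merely extracts from it (together with Claudel--Virbhadra--Ellis) the codimension-one consequence stated as Proposition~\ref{proposition:D-dimPhotonsurface}. So there is no in-paper proof to compare against.

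That said, your proof sketch is precisely the natural one and matches Perlick's approach in spirit: the Gauss decomposition $\nabla_X Y=\nabla^{\tilde M}_X Y+II(X,Y)$ reduces both directions to the behaviour of $II$ on null vectors, and the only nontrivial ingredient is the algebraic lemma that a symmetric bilinear form on a Lorentzian vector space vanishing on the full null cone is a multiple of the metric. Your frame computation with $e_0\pm e_i$ and $e_0+\cos\theta\,e_i+\sin\theta\,e_j$ establishes this cleanly, and your handling of the normal-bundle-valued case (contracting against each normal direction and reassembling via nondegeneracy of $\mathbf g$ on the normal bundle) is the right way to close the argument in arbitrary codimension. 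The two-dimensional degeneration is handled correctly as well. There are no gaps.
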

The proposition states that a timelike submanifold $\tilde{M}$ is a photon surface if and only if it is totally umbilic.
(Strictly speaking, photon surface is originally defined as a hypersurface, i.e. a surface of codimension one in~\cite{claudel}.
However, we can easily generalize Definition~\ref{definition:photonsurface} for surfaces of arbitrary codimensions.)
His work includes the definition of {\it totally umbilic} submanifold which is applicable to both the degenerate and nondegenerate submanifold.
However, for a nondegenerate submanifold of codimension one, such as a timelike hypersurface, the total umbilicity is equivalent to that the second fundamental form is pure-trace everywhere on the submanifold, i.e. $\sigma_{ab}=0$ $\forall p\in \tilde{M}$.
\par
From Theorem~\ref{theorem:equivalent} and Proposition~\ref{proposition:D-DimEquivalent}, we have the following proposition which we need for our purpose:
\begin{proposition}
\label{proposition:D-dimPhotonsurface}
A timelike hypersurface $S$ of $(M,g)$ is a photon surface if and only if it is totally umbilic.
It is equivalent to the condition,
\begin{equation}
\llaabel{eq:umbilic}
\sigma_{ab}=0\;\forall p\in S.
\end{equation}
\end{proposition}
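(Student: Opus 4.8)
The plan is to obtain Proposition~\ref{proposition:D-dimPhotonsurface} by specialising Perlick's Proposition~\ref{proposition:D-DimEquivalent} to the codimension-one case and then translating the notion of total umbilicity into the vanishing of $\sigma_{ab}$, exactly as in the four-dimensional Theorem~\ref{theorem:equivalent}. First, I would observe that, for a timelike hypersurface $S\subset M$, the natural extension of Definition~\ref{definition:photonsurface} to arbitrary dimension is literally the statement ``every lightlike geodesic that starts tangent to $S$ remains within $S$ for some parameter interval about the starting point''. Hence, taking $\tilde{M}=S$ in Proposition~\ref{proposition:D-DimEquivalent} — which is permissible since $2\le\dim S=\dim M-1\le\dim M$ whenever $D\ge 3$ — gives at once that $S$ is a photon surface if and only if $S$ is totally umbilic in Perlick's sense.

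Second, I would reduce ``totally umbilic'' to ``$\sigma_{ab}=0$''. Because $S$ is timelike, it is a nondegenerate hypersurface, so it carries a locally defined unit normal ${\bf n}$, an induced Lorentzian metric $h_{ab}$, and a genuine second fundamental form $\chi_{ab}=h_a{}^c h_b{}^d\nabla_c n_d$. Perlick's definition of total umbilicity, which is formulated so as to cover degenerate submanifolds as well, collapses in the nondegenerate codimension-one case to the usual requirement that $\chi_{ab}$ be pure trace, i.e.\ $\chi_{ab}=\varphi\,h_{ab}$ for some function $\varphi$ on $S$; this is precisely the condition $\sigma_{ab}:=\chi_{ab}-\frac{1}{D-1}(h^{cd}\chi_{cd})h_{ab}=0$. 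This step is essentially bookkeeping — unwinding Perlick's definition — but it is the one place where the \emph{timelike} (hence nondegenerate) character of $S$ is actually used, so it is where I would be most careful.

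Finally, for completeness I would record the elementary linear-algebra fact behind the equivalence ``(ii)$\Leftrightarrow$(iii)'' of Theorem~\ref{theorem:equivalent}, now valid in $D$ dimensions: if $T_{ab}$ is a symmetric bilinear form on a Lorentzian vector space $(V,h)$ with $T_{ab}k^a k^b=0$ for every null $k$, then $T_{ab}=\lambda\,h_{ab}$ for some scalar $\lambda$. Indeed, polarising the identity and using that null vectors span $V$, any such $T$ differs from a multiple of $h$ by a symmetric form that vanishes on the entire light cone, which forces it to vanish; applied to $T_{ab}=\sigma_{ab}$ (which is already tracefree) this yields $\sigma_{ab}=0$. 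Together with Theorem~\ref{theorem:equivalent}(ii) this provides an independent check that being a photon surface is equivalent to $\sigma_{ab}=0$ directly, without invoking four-dimensionality. I expect the only genuine obstacle to be the identification of the two a priori distinct notions — ``photon surface'' as defined in \cite{claudel} and ``totally umbilic'' as defined in \cite{perlick} — across arbitrary dimension and codimension; once these are matched, the proposition follows immediately.
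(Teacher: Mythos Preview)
Your proposal is correct and follows essentially the same route as the paper: the paper simply states that Proposition~\ref{proposition:D-dimPhotonsurface} follows from Theorem~\ref{theorem:equivalent} and Proposition~\ref{proposition:D-DimEquivalent}, observing (in the paragraph just before the proposition) that for a nondegenerate codimension-one submanifold Perlick's notion of total umbilicity reduces to $\sigma_{ab}=0$. Your additional linear-algebra check of the equivalence (ii)$\Leftrightarrow$(iii) in arbitrary dimension is a welcome elaboration but not something the paper provides.
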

\par
In the next section, we investigate the explicit condition for a specific timelike hypersurface to be a photon surface in the spacetime we interested in by calculating Eq.~(\ref{eq:umbilic}).

\section{Photon surface of constant radius}
\llaabel{sec:psf-const-r}
We investigate the explicit condition for a timelike photon surface to be a photon surface in the spherically, planar and hyperbolically symmetric spacetime of $D$-dimnesions given by the metric~(\ref{eq:metric}).
Here we focus our attention on a timelike photon surface of constant radius.
We consider the [$(D-1)$-dimensional] timelike hypersurface of constant radius $S_r$ defined by
\begin{equation}
S_r:=\left\{p\in M|r=const. \right\}
\end{equation}
and investigate the condition for $S_r$ to be a photon surface.
Note that, in the spherical case, the photon surface exactly coincides with the photon sphere studied in~\cite{koga}.
\subsection{Second fundamental form}
The hypersurface $S_r$ has the normal,
\begin{equation}
n_a=\sqrt{g}dr_a,
\end{equation}
and the induced metric $h_{ab}$ on it,
\begin{equation}
h_{ab}=g_{ab}-n_an_b.
\end{equation}
\par
The second fundamental form $\chi_{ab}$ is given by
\begin{eqnarray}
\chi_{ab}&:=&h^c_a\nabla_cn_b\nonumber\\
&=&h^c_a(\partial_cn_b-{\Gamma^d}_{cb}n_d)\nonumber\\
&=&-h^c_a{\Gamma^d}_{cb}n_d
\end{eqnarray}
where we used the fact that $h_\mu^\nu\partial_\nu=(\partial_t,0,\partial_\chi,\partial_{\theta_1},...,\partial_{\theta_{D-3}})$.
The components are obtained from
\begin{eqnarray}
\chi_{\mu\nu}
&=&-h^\sigma_\mu{\Gamma^\rho}_{\sigma\nu}n_{\rho}\nonumber\\
&=&-{\Gamma^r}_{\mu\nu}\sqrt{g}
\end{eqnarray}
where $\mu,\nu=t,\chi,\theta_1,...,\theta_{D-3}$.
Hereafter we calculate the components in the tetrad system $\{e_{(\mu)}\}$ defined so that
\begin{equation}
e_{(\mu)}\propto\partial_\mu.
\end{equation}
Because the brackets of the third term in Eq.~(\ref{eq:metric}) represents constant curvature $(D-2)$-space, it is sufficient to evaluate the components $\chi_{\left(\mu\right)\left(\nu\right)}$ only for $\mu,\nu=t,\chi,\theta$ where $\theta:=\theta_1$ corresponds to one of the coordinate of the $(D-3)$-sphere $d\Omega^2_{D-3}$.
The calculations of the Christoffel symbol (See Appendix~\ref{app:christoffel}) gives
\begin{equation}
\chi_{(i)(j)}=\sqrt{g}^{-1}diag\left[-\frac{1}{2}\frac{f'}{f},\frac{1}{r},\frac{1}{r}\right]
\end{equation}
where $i,j=t,\chi,\theta$.
\par
The trace $\Theta$ of the second fundamental form is given by
\begin{equation}
\Theta:=h^{ab}\chi_{ab}=\eta^{(\mu)(\nu)}\chi_{(\mu)(\nu)}
=\sqrt{g}^{-1}\left[\frac{1}{2}\frac{f'}{f}+\left(D-2\right)\frac{1}{r}\right].
\end{equation}
\par
The trace-free part $\sigma_{ab}$ of the second fundamental form is then given by
\begin{equation}
\sigma_{ab}=\chi_{ab}-\frac{1}{D-1}\Theta h_{ab}.
\end{equation}
Its components are given by
\begin{equation}
\sigma_{(i)(j)}=-\frac{1}{2(D-1)}\frac{(fr^{-2})'}{(fr^{-2})}\sqrt{g}^{-1}diag\left[D-2,1,1\right].
\end{equation}
Finally, for all the components including $\mu,\nu=r$, we have
\begin{equation}
\label{eq:sigma}
\sigma_{(\mu)(\nu)}=-\frac{1}{2(D-1)}\frac{(fr^{-2})'}{(fr^{-2})}\sqrt{g}^{-1}diag\left[D-2,0,1,...,1\right].
\end{equation}
\subsection{Condition for photon surface}
From Proposition~\ref{proposition:D-dimPhotonsurface}, the timelike hypersurface $S_r$ is a photon surface if and only if it is totally umbilic, i.e. $\sigma_{ab}=0\; \forall p\in S$.
Then we obtain the following proposition from Eq.~(\ref{eq:sigma}):
\begin{proposition}
\label{proposition:const-r-psf}
A timelike hypersurface $S_r$ of the radius $r$ is a photon surface if and only if
\begin{equation}
\label{eq:const-r-psf}
(fr^{-2})'=0
\end{equation}
is satisfied at the radius.
\end{proposition}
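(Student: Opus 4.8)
The plan is to reduce the claim to the computation of the trace-free part $\sigma_{ab}$ of the second fundamental form of $S_r$ carried out in the preceding subsection, and then to invoke Proposition~\ref{proposition:D-dimPhotonsurface}. By that proposition, the timelike hypersurface $S_r$ is a photon surface if and only if it is totally umbilic, which here is equivalent to $\sigma_{ab}=0$ at every $p\in S_r$. Hence it suffices to determine exactly when the explicit expression~(\ref{eq:sigma}) vanishes identically on $S_r$.

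First I would record the unit normal $n_a=\sqrt{g}\,dr_a$ and the induced metric $h_{ab}=g_{ab}-n_an_b$, and write $\chi_{ab}=h^c_a\nabla_c n_b=-h^c_a{\Gamma^d}_{cb}n_d$, so that the only nonvanishing components are $\chi_{\mu\nu}=-\sqrt{g}\,{\Gamma^r}_{\mu\nu}$ with $\mu,\nu\in\{t,\chi,\theta_1,\dots,\theta_{D-3}\}$. The one step that deserves a word of care is the symmetry reduction: since the bracketed factor of~(\ref{eq:metric}) is a $(D-2)$-dimensional space of constant curvature, in the adapted tetrad $\{e_{(\mu)}\}$ the matrix $\chi_{(\mu)(\nu)}$ is diagonal and carries a single common value on the whole $(D-2)$-block, so it is enough to evaluate the three entries $\mu,\nu\in\{t,\chi,\theta\}$ with $\theta:=\theta_1$. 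Feeding in the Christoffel symbols from Appendix~\ref{app:christoffel} gives $\chi_{(i)(j)}$ and its trace $\Theta$, and subtracting $\frac{1}{D-1}\Theta h_{ab}$ produces~(\ref{eq:sigma}).

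Finally I would conclude from~(\ref{eq:sigma}) itself. Because $f>0$ and $r>0$ by hypothesis we have $fr^{-2}>0$, so the scalar coefficient $\frac{(fr^{-2})'}{fr^{-2}}\sqrt{g}^{-1}$ is finite and vanishes precisely when $(fr^{-2})'=0$; meanwhile the constant matrix $diag[D-2,0,1,\dots,1]$ is not the zero matrix. Therefore every component $\sigma_{(\mu)(\nu)}$ vanishes at a point of $S_r$ if and only if $(fr^{-2})'=0$ there, and since $r$ is constant along $S_r$ this is a single algebraic condition on the radius, holding at one point of $S_r$ iff it holds at all of them. Combining this with Proposition~\ref{proposition:D-dimPhotonsurface} yields both directions of the stated equivalence. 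I do not anticipate any genuine obstacle: the only delicate points are the reduction to three tetrad components by maximal symmetry of the transverse space, and the use of $f,r>0$ to cancel $fr^{-2}$ without encountering zeros.
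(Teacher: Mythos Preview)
Your proposal is correct and follows exactly the paper's approach: compute the trace-free second fundamental form $\sigma_{(\mu)(\nu)}$ of $S_r$ as in Eq.~(\ref{eq:sigma}), then apply Proposition~\ref{proposition:D-dimPhotonsurface} to conclude that $S_r$ is a photon surface iff the scalar prefactor $(fr^{-2})'$ vanishes. Your additional remarks about the maximal symmetry of the transverse space, the positivity of $f$ and $r$, and the nonvanishing of the diagonal matrix simply make explicit what the paper leaves implicit in passing from Eq.~(\ref{eq:sigma}) to the proposition.
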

We hereafter call a hypersurface of constant radius, $S_r$, {\it constant-$r$ photon surface} if it is a photon surface.

\section{Stability of constant-$r$ photon surface}
\llaabel{sec:stability}
A photon sphere can be applied to many physics, for example, BH shadows, quasinormal modes and gravitational instability of spacetime.
The important thing which plays a crucial role in the applications is {\it stability} of a photon sphere.
BH shadows, for example, are shaped by unstable photon spheres.
Here we define {\it stability} of a constant-$r$ photon surface like as a photon sphere and find its condition.
\subsection{Definition}
A photon sphere (in spherically symmetric spacetime) can be classified into a stable and unstable one in the sense that the circular null geodesics along the sphere take stable or unstable circular orbits.
A circular orbit is a orbit of constant radius and is said to be stable if the orbit is stable against small radial perturbation and otherwise unstable.
In the planar and hyperbolic case of spacetime~(\ref{eq:metric}), we here say a orbit of constant radius, $r$, is stable or unstable in that sense.
Then we define the {\it stability} of the constant-$r$ photon surface as in the same way as the stability of photon sphere defined in~\cite{koga}:
\begin{definition}
\llaabel{definition:stability}
Let $S_r$ be a timelike constant-$r$ photon surface.
Let $\gamma$ be a null geodesic along $S_r$ and therefore a orbit of constant radius.
Then $S_r$ is said to be stable if every $\gamma$ is a stable orbit and unstable if every $\gamma$ is an unstable orbit.
\end{definition}
Note that if one finds some $\gamma$ which is stable (unstable), it implies every $\gamma$ along $S_r$ is stable (unstable) and therefore the constant-$r$ photon surface $S_r$ is also stable (unstable) because the static slice of each $r=const.$ hypersurfaces has maximal symmetry.
\subsection{Condition}
For the stability condition of the constant-$r$ photon surface, the following proposition holds:
\begin{proposition}
\llaabel{proposition:stability}
The stability condition of the constant-$r$ photon surface $S_r$ is given by
\begin{equation}
\label{eq:stability}
stable\,(unstable) \Leftrightarrow (fr^{-2})''>0\, (<0)
\end{equation}
at the radius.
\end{proposition}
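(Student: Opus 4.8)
\emph{Proof strategy.} By the observation following Definition~\ref{definition:stability} it suffices to decide the radial stability of a single null geodesic $\gamma$ lying along $S_r$. The plan is to bring the radial null-geodesic equation in the metric~(\ref{eq:metric}) to effective-potential form, characterise a circular orbit at radius $r_0$ as a turning point that is simultaneously a critical point of the potential (which reproduces the photon-surface condition $(fr^{-2})'(r_0)=0$), and then read the stability off the next derivative of the potential at $r_0$.

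First I would reduce the problem. Because the transverse space $d\chi^2+s^2(\chi)\,d\Omega^2_{D-3}$ is maximally symmetric, the locus $\{\theta_1=\dots=\theta_{D-3}=\mathrm{const}\}$ is totally geodesic, so a null geodesic tangent to it remains in the $2+1$--dimensional block with metric $-f\,dt^2+g\,dr^2+r^2\,d\chi^2$. There $\partial_t$ and $\partial_\chi$ are Killing, giving conserved $E:=f\dot t$ and $L:=r^2\dot\chi$, and the null condition becomes
\begin{equation}
g(r)\,\dot r^{2}=\mathcal{R}(r):=\frac{E^{2}}{f(r)}-\frac{L^{2}}{r^{2}}
=\frac{E^{2}}{f(r)}\left(1-b^{2}\,\frac{f(r)}{r^{2}}\right),
\end{equation}
with $b^{2}:=L^{2}/E^{2}$. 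Write $\mathcal{R}=AB$ with $A:=E^{2}/f>0$ and $B:=1-b^{2}fr^{-2}$. A circular orbit at $r_0$ needs $\dot r=\ddot r=0$; since $\ddot r=\tfrac{1}{2g}(\mathcal{R}'-g'\dot r^{2})$, these amount to $\mathcal{R}(r_0)=0$ and $\mathcal{R}'(r_0)=0$. The first gives $b^{2}=r_0^{2}/f(r_0)$, i.e.\ $B(r_0)=0$; the second then reads $\mathcal{R}'(r_0)=A(r_0)B'(r_0)=-A(r_0)b^{2}\,(fr^{-2})'(r_0)=0$, i.e.\ $(fr^{-2})'(r_0)=0$, consistently with Proposition~\ref{proposition:const-r-psf}.

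Next I would linearise about $r_0$: with $\delta r:=r-r_0$ and $\mathcal{R}(r_0)=\mathcal{R}'(r_0)=0$, one finds to leading order $\ddot{\delta r}=\tfrac{1}{2g(r_0)}\mathcal{R}''(r_0)\,\delta r$, so $\gamma$ is a stable orbit iff $\mathcal{R}''(r_0)<0$ and unstable iff $\mathcal{R}''(r_0)>0$ (equivalently, neighbouring null geodesics with slightly perturbed $b$ stay, resp.\ fail to stay, near $r_0$). The final input is the algebraic fact that at $r_0$ both $B(r_0)=0$ and $B'(r_0)=-b^{2}(fr^{-2})'(r_0)=0$, so the product rule collapses to
\begin{equation}
\mathcal{R}''(r_0)=A(r_0)\,B''(r_0)=-\frac{E^{2}b^{2}}{f(r_0)}\,(fr^{-2})''(r_0),
\end{equation}
with a strictly positive prefactor. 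Hence $\mathcal{R}''(r_0)$ and $(fr^{-2})''(r_0)$ have opposite signs, which is exactly~(\ref{eq:stability}); the conclusion then passes from $\gamma$ to all of $S_r$ by the maximal symmetry of the constant-$r$ slices.

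I do not expect a genuine obstacle. The only steps needing care are: (i) justifying that $\{\theta_i=\mathrm{const}\}$ is totally geodesic --- one can instead avoid the reduction by keeping the full transverse angular momentum $J^{2}=r^{4}\gamma_{AB}\dot x^{A}\dot x^{B}$, which is conserved and merely replaces $L$ above, with the same final outcome; and (ii) making precise the notion of ``stable orbit'' used in Definition~\ref{definition:stability}, which I would fix as boundedness of the linearised radial excursion, i.e.\ $\mathcal{R}''(r_0)<0$.
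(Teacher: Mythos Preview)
Your argument is correct and follows the same effective-potential strategy as the paper: reduce the null geodesic equation to one radial degree of freedom with conserved $E$ and $L$, identify the circular orbit with $(fr^{-2})'(r_0)=0$, and read stability from the sign of the second derivative. The paper's potential $V(r)=-\tfrac{1}{2fg}(E^2-fr^{-2}L^2)$ is precisely $-\mathcal{R}/(2g)$ in your notation, so the two stability criteria coincide.

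The only organisational difference is in the reduction. The paper first restricts to the equatorial plane of the $(D-3)$-sphere and then, for each of the spherical, planar and hyperbolic cases separately, writes out the three spatial Killing vectors of the constant-curvature $2$-space, chooses initial data $\phi=\pi/2$, $\dot\phi=0$ to kill two of the angular momenta, and observes that all three cases yield the same reduced Hamiltonian. You instead pass in one step to the totally geodesic $3$-block $\{\theta_i=\mathrm{const}\}$ and use that $\partial_\chi$ is Killing for the \emph{induced} metric there (it is not Killing in the full spacetime, but that is immaterial since the block is totally geodesic). Your route is more economical and treats all three symmetry types uniformly; the paper's route makes explicit that the relevant Killing structure is present in each case. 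Your factorisation $\mathcal{R}=AB$ with $B=1-b^{2}fr^{-2}$ is also a nice touch, since $B(r_0)=B'(r_0)=0$ collapses $\mathcal{R}''(r_0)$ directly to $-A(r_0)b^{2}(fr^{-2})''(r_0)$, a step the paper leaves implicit.
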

\begin{proof}
As mentioned below Definition~\ref{definition:stability}, it is sufficient to see stability of only one null geodesic along the photon surface.
\par
Stability of orbits of constant radius can be analyzed by means of effective potentials.
Without loss of generality, we focus on null geodesics on the equatorial plane of the unit $(D-3)$-sphere, $d\Omega_{D-3}^2$.
Then the null geodesics are effectively on the $4$-dimensional spacetime,
\begin{equation}
ds^2=-f(r)dt^2+g(r)dr^2+r^2\left(d\chi^2+s^2(\chi)d\phi^2\right),
\end{equation}
where $\phi:=\theta_{D-3}$.
For all the cases of $s(\chi)$ in Eq.~(\ref{eq:maxsym2space}), we have a timelike Killing vector $\xi_t:=\partial_t$, three spatial Killing vectors $\xi_i\,(i=1,2,3)$ and the corresponding conserved quantity, the energy $E:=-g_{\mu\nu}\dot{x}^\mu\xi_t^\nu$ and the three angular momentums $L_{i}:=g_{\mu\nu}\dot{x}^\mu\xi_i^\nu\,(i=1,2.3)$.
Then the Hamiltonian of the null geodesics,
\begin{equation}
\mathcal{H}:=\frac{1}{2}g_{\mu\nu}\dot{x}^\mu\dot{x}^\nu\left(=0\right),
\end{equation}
reduces to
\begin{equation}
\label{eq:hamiltonian}
\mathcal{H}=\frac{1}{2}\left[-f(r)\dot{t}^2+g(r)\dot{r}^2+r^2\left(\dot{\chi}^2+s^2(\chi)\dot{\phi}^2\right)\right]
\end{equation}
where the dots denotes the derivatives by the affine parameter $\lambda$, $\dot{}:=d/d\lambda$.
From the symmetry of the spacetime, we can further reduce the problem to the one with the only two parameters, energy, $E$, and the one of the angular momentums, $L$, by an appropriate choice of the initial value in each case of the spatial symmetry as follows.
\par
In the spherically symmetric case, we have the spatial Killing vectors,
\begin{eqnarray}
\xi_1&=&\sin\phi\partial_\chi+\cot\chi\cos\phi\partial_\phi,\nonumber\\
\xi_2&=&\cos\phi\partial_\chi-\cot\chi\sin\phi\partial_\phi,\nonumber\\
\xi_3&=&\partial_\phi.
\end{eqnarray}
The angular momentums are
\begin{eqnarray}
L_1&=&\sin\phi p_\chi+\cot\chi\cos\phi p_\phi,\nonumber\\
L_2&=&\cos\phi p_\chi-\cot\chi\sin\phi p_\phi,\nonumber\\
L_3&=& p_\phi
\end{eqnarray}
where $ p_\mu:=g_{\mu\nu}\dot{x}^\nu$.
Taking the initial value of the orbit as $\phi=\pi/2$ and $\dot{\phi}=0$, we have $L_1=r^2\dot{\chi}$ and $L_2=L_3=0$ for all time.
The Hamiltonian Eq.~(\ref{eq:hamiltonian}) reduces to
\begin{equation}
\label{eq:hamiltonian-spherical}
\mathcal{H}=g\left[\frac{1}{2}\dot{r}^2-\frac{1}{2fg}\left(E^2-fr^{-2}L_1^2\right)\right].
\end{equation}
\par
In the planar case, we have the spatial Killing vectors,
\begin{eqnarray}
\xi_1&=&\sin\phi\partial_\chi+\frac{1}{\chi}\cos\phi\partial_\phi,\nonumber\\
\xi_2&=&\cos\phi\partial_\chi-\frac{1}{\chi}\sin\phi\partial_\phi,\nonumber\\
\xi_3&=&\partial_\phi.
\end{eqnarray}
The angular momentums are
\begin{eqnarray}
L_1&=&\sin\phi p_\chi+\frac{1}{\chi}\cos\phi p_\phi,\nonumber\\
L_2&=&\cos\phi p_\chi-\frac{1}{\chi}\sin\phi p_\phi,\nonumber\\
L_3&=&p_\phi.
\end{eqnarray}
Taking the initial value of the orbit as $\phi=\pi/2$ and $\dot{\phi}=0$, we have $L_1=r^2\dot{\chi}$ and $L_2=L_3=0$ for all time.
The Hamiltonian Eq.~(\ref{eq:hamiltonian}) reduces to
\begin{equation}
\label{eq:hamiltonian-planar}
\mathcal{H}=g\left[\frac{1}{2}\dot{r}^2-\frac{1}{2fg}\left(E^2-fr^{-2}L_1^2\right)\right].
\end{equation}
\par
In the hyperbolically symmetric case, we have the spatial Killing vectors,
\begin{eqnarray}
\xi_1&=&\sin\phi\partial_\chi+\coth\chi\cos\phi\partial_\phi,\nonumber\\
\xi_2&=&\cos\phi\partial_\chi-\coth\chi\sin\phi\partial_\phi,\nonumber\\
\xi_3&=&\partial_\phi.
\end{eqnarray}
The angular momentums are
\begin{eqnarray}
L_1&=&\sin\phi p_\chi+\coth\chi\cos\phi p_\phi,\nonumber\\
L_2&=&\cos\phi p_\chi-\coth\chi\sin\phi p_\phi,\nonumber\\
L_3&=& p_\phi.
\end{eqnarray}
Taking the initial value of the orbit as $\phi=\pi/2$ and $\dot{\phi}=0$, we have $L_1=r^2\dot{\chi}$ and $L_2=L_3=0$ for all time.
The Hamiltonian Eq.~(\ref{eq:hamiltonian}) reduces to
\begin{equation}
\label{eq:hamiltonian-hyperbolic}
\mathcal{H}=g\left[\frac{1}{2}\dot{r}^2-\frac{1}{2fg}\left(E^2-fr^{-2}L_1^2\right)\right].
\end{equation}
\par
As a result, from Eqs.~(\ref{eq:hamiltonian-spherical}),~(\ref{eq:hamiltonian-planar}) and~(\ref{eq:hamiltonian-hyperbolic}), we obtain the same one-dimensional effective equation of motion,
\begin{equation}
\frac{1}{2}\dot{r}^2+V(r)=0,\;V(r):=-\frac{1}{2fg}\left(E^2-fr^{-2}L^2\right),
\end{equation}
for all the cases of symmetry where $L:=L_1$ is the angular momentum in each of the cases.
The orbit of constant radius along the constant-$r$ photon surface $S_r$ satisfies the condition, $V=V'=0$, which gives the condition for constant-$r$ photon surface, Eq.~(\ref{eq:const-r-psf}), as expected.
The orbit is stable if $V''>0$ and unstable if $V''<0$ at the radius.
Together with the condition, Eq.~(\ref{eq:const-r-psf}), we find the stability condition of the constant-$r$ photon surface $S_r$, Eq.~(\ref{eq:stability}).
\end{proof}
\par
We see examples of constant-$r$ photon surfaces in well-known spacetimes in Appendix~\ref{app:examples}.
We also investigate their stability using the formula in Proposition~\ref{proposition:stability}.


\section{Accretion problem}
\llaabel{sec:accretion}
We consider stationary radial fluid flow on the spacetime~(\ref{eq:metric}).
The flow is spherically, planar or hyperbolically symmetric depending on the spatial symmetry of the spacetime, Eq.~(\ref{eq:maxsym2space}).
After formulating the accretion problem, we give a general analysis using dynamical systems method.
(We sometimes call the problem ``accretion problem" conventionally, however, it would not make sense because the spacetime we consider is not interpreted as spacetime made by some central object in the nonspherical cases.)
The method clarifies the difference between a critical point, which is a singular point of the system, and a sonic point, which is a point where Mach number of flow equals to one.
\par
We assume three conservation equations, the first law, continuity equation and energy-momentum conservation with perfect fluid:
\begin{subnumcases}
{}
\llaabel{eq:firstlaw}
dh=Tds+n^{-1}dp\\
\llaabel{eq:continuity}
\nabla_a J^a=0\\
\llaabel{eq:emconservation}
\nabla_a T^a_b=0
\end{subnumcases}
where $J^a:=nu^a$ is the number current and $T^a_{b}=nhu^a
u_b+p\delta^a_b$ is the energy-momentum tensor of the perfect fluid.
The quantities $h,T,s,n,p$ and $u^a$ are the enthalpy per particle, the temperature,
the entropy per particle, the number density, the pressure and the 4-velocity of the fluid, respectively.
\par
\if0
Contraction of Eq.~(\ref{eq:emconservation}) with $u^b$ gives the adiabatic condition of the fluid, $u^a\nabla_as=0$, together with Eqs.~(\ref{eq:firstlaw}) and (\ref{eq:continuity}) in general.
The assumption that the fluid is stationary and spherically, planar or hyperbolically symmetric implies that the entropy is a function of $r$ only, $s=s(r)$, and that the 4-velocity has only the $t$- and $r$-components, $u=u^t\partial_t+u^r\partial_r$.
Therefore the entropy $s$ is constant over the whole spacetime from the adiabatic condition $u^a\nabla_as(r)=\left(u^t\partial_t+u^r\partial_r\right)s(r)=u^r\partial_rs(r)=0$.
In this accretion problem, we regard $s$ as a fixed parameter and include it implicitly in the equation of state, $h=h(s,p)$, by writing  as $h=h(p)$.
In that sense, the pressure $p$ is a function of the number density $n$, $p=p(n)$, and the enthalpy is also a function of $n$, $h=h\left(p\left(n\right)\right)$, or simply, we can write it as
\begin{equation}
\label{eq:adiabaticenthalpy}
h=h(n).
\end{equation}
\par
Integrating Eq.~(\ref{eq:continuity}), 
\begin{eqnarray}
\llaabel{eq:integrate-number}
0&=&\nabla_aJ^a\nonumber\\
&=&\frac{1}{\sqrt{-\det{g_{\alpha\beta}}}}\partial_\mu\left(\sqrt{-\det{g_{\alpha\beta}}}J^\mu\right)\nonumber\\
&=&\frac{1}{\sqrt{-\det{g_{\alpha\beta}}}}\partial_r\left(\sqrt{-\det{g_{\alpha\beta}}}J^r\right),
\end{eqnarray}
we have
\begin{equation}
\label{eq:numberflux}
j_n:=(fg)^{1/2}r^{D-2}nu^r=const.
\end{equation}
where $\det{g_{\alpha\beta}}$ is the determinant of the metric and we used the symmetries of the fluid in the third equality in Eq.~(\ref{eq:integrate-number}).
The quantity $j_n$ represents the particle flux of the fluid.
\par
Contracting with the static Killing vector $\xi_t^b$, Eq.~(\ref{eq:emconservation}) is calculated as
\begin{eqnarray}
\llaabel{eq:energy-conservation}
0&=&\xi_t^b\nabla_aT^a_b\nonumber\\
&=&\xi_t^b\nabla_a\left(nhu^au_b+p\delta^a_b\right)\nonumber\\
&=&\xi_t^b\nabla_a\left(nhu^au_b\right)\nonumber\\
&=&\nabla_a\left(nhu^au_b\xi_t^b\right)-nhu^au_b\nabla_a\xi_t^b\nonumber\\
&=&\nabla_a\left(nhu^au_b\xi_t^b\right)
\end{eqnarray}
and reduces to the equation of conservation of the energy current $I^a:=nhu_b\xi_t^bu^a$,
\begin{equation}
\llaabel{eq:energy-consrvation}
\nabla_aI^a=0
\end{equation}
where we used the stationarity of $p$ in the third equality and Killing equation in the last equality in Eq.~(\ref{eq:energy-conservation}).
The integration of Eq.~(\ref{eq:energy-consrvation}) gives
\begin{equation}
\label{eq:energyflux}
j_\epsilon:=(fg)^{1/2}r^{D-2}nhu_tu^r=const.
\end{equation}
in the same way as Eqs.~(\ref{eq:integrate-number}) and (\ref{eq:numberflux})
where $j_\epsilon$ is the energy flux.
\par
The ratio $j_\epsilon^2/j_n^2$ is calculated as
\begin{eqnarray}
\frac{j_\epsilon^2}{j_n^2}
&=&h^2u_t^2\nonumber\\
&=&h^2\left[f+fg(u^r)^2\right]\nonumber\\
&=&h^2\left[f+\frac{j^2}{r^{2(D-2)}n^2}\right]
\end{eqnarray}
where the normalization condition of the 4-velocity, $-1=g_{ab}u^au^b=-f(u^t)^2+g(u^r)^2$, and Eq.~(\ref{eq:numberflux}) used in the second and last equality, respectively.
Then, as a consequence of the integrations, Eqs.~(\ref{eq:adiabaticenthalpy}), (\ref{eq:numberflux}) and (\ref{eq:energyflux}), 
the problem is formulated into the algebraic master equation:
\begin{equation}
\label{eq:masterequation}
F(r,n):=\left(\frac{j_\epsilon}{j_n}\right)^2
=h^2(n)\left[f(r)+\frac{\mu^2}{r^{2(D-2)}n^2}\right]=const.,
\ \ \ \ \ \mu:=j_n
\end{equation}
$F$ is the energy square per particle and $\mu$ is the parameter interpreted as the accretion rate of the flow.
Our accretion problem is the problem of finding the solution of a fluid flow as a level curve $n=n(r)$ on the phase space $(r,n)$ satisfying $F(r,n(r))=const.$ for a given parameter $\mu$.
Once the number density distribution $n(r)$ obtained for the parameter $\mu$, the equation $j_n:=\left(fg\right)^{1/2}r^{D-2}nu^r=\mu$ gives the corresponding velocity distribution $u^r(r)$.
\par
\fi
Contraction of Eq.~(\ref{eq:emconservation}) with $u^b$ gives the adiabatic condition of the fluid, $u^a\nabla_as=0$, in general together with Eqs.~(\ref{eq:firstlaw}) and (\ref{eq:continuity}).
The assumption that the fluid is stationary and spherically, planarly or hyperbolically symmetric implies that the entropy is a function of $r$ only, $s=s(r)$, and that the 4-velocity has only the $t$- and $r$- components, $u=u^t\partial_t+u^r\partial_r$.
Besides, the spatial symmetry of the fluid distribution implies constancy of the entropy $s$ on a time slice.
Therefore the entropy $s$ is constant over the whole spacetime and we can write the enthalpy as a function of the number density,
\begin{equation}
\label{eq:adiabaticenthalpy}
h=h(n).
\end{equation}
Integrating Eq.~(\ref{eq:continuity}), we have
\begin{equation}
\label{eq:numberflux}
j_n:=(fg)^{1/2}r^{D-2}nu^r=const,
\end{equation}
from the symmetries of the fluid and the spacetime metric.
The quantity $j_n$ represents the particle flux of the fluid.
Contracting with the static Killing vector $\xi_t^b$, Eq.~(\ref{eq:emconservation}) reduces to the equation of conservation of the energy current $I^a:=nhu_b\xi_t^bu^a$,
\begin{equation}
\nabla_aI^a=0.
\end{equation}
The integration gives
\begin{equation}
\label{eq:energyflux}
j_\epsilon:=(fg)^{1/2}r^{D-2}nhu_tu^r=const.
\end{equation}
where $j_\epsilon$ is the energy flux.
Combining Eqs.~(\ref{eq:adiabaticenthalpy}), (\ref{eq:numberflux}) and (\ref{eq:energyflux}) and using the normalization condition of the 4-velocity $u$,
the problem is formulated into the algebraic master equation:
\begin{equation}
\label{eq:masterequation}
F(r,n):=\left(\frac{j_\epsilon}{j_n}\right)^2
=h^2(n)\left[f(r)+\frac{\mu^2}{r^{2(D-2)}n^2}\right]=const.,
\ \ \ \ \ \mu:=j_n
\end{equation}
$F$ is the energy square per particle and $\mu$ is the parameter interpreted as the accretion rate of the flow.
Our accretion problem is the problem of finding the solution of a fluid flow as a level curve $n=n(r)$ on the phase space $(r,n)$ satisfying $F(r,n(r))=const.$ for a given parameter $\mu$.
Once the number density distribution $n(r)$ obtained for the parameter $\mu$, the equation $j_n:=\left(fg\right)^{1/2}r^{D-2}nu^r=\mu$ gives the corresponding velocity distribution $u^r(r)$.
\par
Note that there is no distinction concerning the spatial geometry $s(\chi)$ of the spacetime in the master equation Eq.~(\ref{eq:masterequation}).
Thus far the problem completely coincides with the accretion problem of spherical flow in~\cite{koga}.
We analyze our accretion problem in exactly the same procedure as~\cite{koga} in the following.
\subsection{Critical point}
Here we give the definition of the critical point and its classification by reformulating the accretion problem in terms of a dynamical system on the phase space $(r,n)$.
The analysis of this kind was first introduced into an accretion problem by Chaverra and Sarbach~\cite{chaverra}.
Generally, the critical point plays an important role in accretion problems and is closely related to the sonic point of the flow.
\subsubsection{Definition of critical point}
In our accretion problem Eq.~(\ref{eq:masterequation}), the solutions are described as level curves of the function $F(r,n)$ on the phase space $(r,n)$.
These curves can be also obtained by integrating the ordinary differential equation,
\begin{equation}
\label{eq:masterode}
\frac{d}{d\lambda}\left(\begin{array}{c} r\\ n \end{array}\right)
=\left(\begin{array}{r} \partial_n \\ -\partial_r \end{array}\right)F(r,n),
\end{equation}
as orbits with a parameter $\lambda$.
This is a reformulation of the master equation Eq.~(\ref{eq:masterequation}) in terms of a dynamical system with the right-hand side (RHS) being the Hamiltonian vector field with respect to the Hamiltonian $F(r,n)$.
Then, the notion of a {\it critical point} (or stationary point as in a dynamical system) at which the RHS of Eq.~(\ref{eq:masterode}) vanishes arises
and its conditions are
\begin{subnumcases}
{}
\llaabel{eq:criticaln}
\partial_nF(r,n)=0\\
\llaabel{eq:criticalr}
\partial_rF(r,n)=0.
\end{subnumcases}
We define a {\it critical point} $(r_c,n_c)$ of the accretion problem as a point on the phase space $(r,n)$ at which the conditions Eqs.~(\ref{eq:criticaln})-(\ref{eq:criticalr}) are satisfied.
\subsubsection{Types of critical points}
\llaabel{sec:types}
The linearization of Eq.~(\ref{eq:masterode}) around a critical point allows us to classify the critical point into two types.
The one is a saddle point and the another one is an extremum point.
A saddle point is a point, in this case, through which two solution orbits pass.
On the other hand, orbits in the vicinity of an extremum point are closed curves around the point.
\par
The linearization matrix $M_c$ is given by
\begin{equation}
M_c:=\left(
	\begin{array}{cc}
	\partial_r\partial_n & \partial_n^2 \\
	-\partial_r^2 & -\partial_r\partial_n 
	\end{array}
	\right)F(r_c,n_c).
\end{equation}
This matrix, being real, $2\times 2$ and traceless, has two eigenvalues with opposite signs.
If the determinant of the matrix is negative (positive), the eigenvalues are real (pure imaginary).
As in a dynamical system, the real eigenvalues imply that the critical point is a saddle point.
For the imaginary eigenvalues, the orbits around the critical point are periodic in linear order.
However, because they are the level curves of the real function $F(r,n)$, the orbits must be closed loops in the vicinity of the critical point.
Therefore the imaginary eigenvalues imply an extremum point.
\par
We know the explicit form of the determinant $\det M_c$ from~\cite{koga}:
\begin{equation}
\det M_c=-\frac{2}{D-2}r_c(f'_c)^2\frac{h_c^4}{n_c^2}\mathcal{F}'(r_c)
\end{equation}
where
\begin{eqnarray}
\mathcal{F}(r)&:=&v_s^2(\bar{n}(r))\left[1+2(D-2)a(r)\right]-1,\nonumber\\
\bar{n}(r)&:=&|\mu|\sqrt{\frac{2(D-2)}{r^{2D-3}f'(r)}},\nonumber\\
a(r)&:=&\frac{f(r)}{rf'(r)}.\nonumber
\end{eqnarray}
The subscript $c$ means the value at $(r_c,n_c)$.
Then classification of a critical point at radius $r_c$ is given by
\begin{equation}
\llaabel{eq:saddleextremum}
saddle\ (extremum)\ point \Leftrightarrow {\mathcal{F}}'(r_c)>0\ (<0)
\end{equation}
while the critical point $(r_c,n_c)$ itself is also obtained from
\begin{equation}
\label{eq:criticalcondition}
\mathcal{F}(r_c)=0,\;n_c=\bar{n}(r_c).
\end{equation}
\subsection{Sonic point}
In an accretion problem, the fluid flow can transit from subsonic state (i.e. state where its 3-velocity is smaller than its local sound speed $v_s$) to supersonic state (i.e. state where the 3-velocity is greater than $v_s$) and vice versa.
Such a fluid flow is said to be {\it transonic} and here we call any flow which has both subsonic and supersonic regions {\it transonic flow}.
The point at which transition between subsonic and supersonic states of a transonic flow occurs is called {\it sonic point}.
A sonic point is also related to a critical point mathematically.
\subsubsection{Definition of sonic point}
Since, in our accretion problem, a fluid accretion flow is a solution orbit of Eq.~(\ref{eq:masterequation}),
we define a {\it sonic point} of a transonic flow as a point on the phase space $(r,n)$:
\begin{definition}
\label{definition:sonicpoint}
For a transonic fluid flow of our accretion problem Eq.~(\ref{eq:masterequation}), let $n=n(r)$ be the corresponding solution curve on the phase space $(r,n)$.
Let $v=v(r)$ be the 3-velocity of the flow at radius $r$ measured by static observers.
A sonic point $(r_s,n_s)$ of the flow is a point on the phase space satisfying the condition
\begin{eqnarray}
\label{eq:sonicpoint}
\left.\frac{v^2}{v_s^2}\right|_{(r_s,n(r_s))}=1,
\end{eqnarray}
where $n_s=n(r_s)$.
\end{definition}
\subsubsection{Sonic point and critical point}
The static observer $u_{o}=f^{-1/2}\partial_t$ measures the squared fluid 3-velocity $v^2$ by
\begin{equation}
\frac{1}{1-v^2}=\left(g_{ab}u^au_o^b\right)^2
\end{equation}
which gives
\begin{equation}
\label{eq:vsquared}
v^2=\frac{\mu^2}{\mu^2+fr^{2(D-2)}n^2}.
\end{equation}
Let us calculate explicitly one of the conditions for the critical point Eq.~(\ref{eq:criticaln}):
\begin{eqnarray}
0&=&\partial_nF\nonumber\\
&=&\frac{2h^2}{n}\frac{\mu^2}{r^{2(D-2)} n^2}\left(v_s^2(n)\left[1+f\frac{r^{2(D-2)} n^2}{\mu^2}\right]-1\right)
\end{eqnarray}
where $v_s^2(n):=\partial\ln h/\partial \ln n$ is the sound speed.
We can see that the sound speed can be always written as
\begin{equation}
\label{eq:vssquared}
v_s^2(n)=\frac{\mu^2}{\mu^2+fr^{2(D-2)} n^2}
\end{equation}
on the points $(r,n)$ satisfying the condition Eq.~(\ref{eq:criticaln}) including the critical point.
Conversely, if Eq.~(\ref{eq:vssquared}) is satisfied on a given point $(r,n)$, the condition Eq.~(\ref{eq:criticaln}) holds.
From this fact and Eqs.~(\ref{eq:vsquared}) and~(\ref{eq:vssquared}), we can show that a sonic point of a physically acceptable transonic flow is identified with a critical point of saddle type as follows.
\par
Consider a physical transonic fluid flow which is specified by the solution curve $n=n(r)$.
From Definition~\ref{definition:sonicpoint} and Eq.~(\ref{eq:vsquared}), the sonic point $(r_s,n_s)$ is determined by
\begin{equation}
\llaabel{eq:spcondition}
v_s^2\left(n(r_s)\right)=\frac{\mu^2}{\mu^2+fr^{2(D-2)}n^2(r_s)}
\end{equation}
and $n_s=n(r_s)$.
Since Eq.~(\ref{eq:spcondition}) implies that the point $(r_s,n_s)$ satisfies Eq.~(\ref{eq:vssquared}), the condition $\partial_n F(r_s,n_s)=0$ also holds as mentioned below Eq.~(\ref{eq:vssquared}).
Then we have the following three cases concerning the sonic point $(r_s,n_s)$:
\begin{enumerate}
\item $\partial_r F(r_s,n_s)\neq0$ (i.e., the sonic point is not a critical point). \label{cond:non-critical}
\item $\partial_r F(r_s,n_s)=0$ (i.e., the sonic point is a critical point due to the fact $\partial_n F(r_s,n_s)=0$).
\begin{enumerate}
\item The corresponding critical point is of saddle type.  \label{cond:saddle}
\item The corresponding critical point is of extremum type. \label{cond:extremum}
\end{enumerate}
\end{enumerate}
In the case~\ref{cond:non-critical}, the curve $n=n(r)$ typically gets double-valued  (so unphysical) at least around $(r_s,n_s)$ locally because $dn/dr=-\partial_rF(r_s,n_s)/\partial_nF(r_s,n_s)=\pm\infty$ there from
Eq.~$(\ref{eq:masterode})$. 
Another possibility with diverging density gradient, which is physically acceptable, is a transonic shock.
In the current paper, we require the finite density gradient, $|dn/dr|<\infty$, as one of the conditions of a physical flow, thus excluding a transonic shock.
Therefore the case~\ref{cond:non-critical} is not allowed for the physical flow $n=n(r)$ and the sonic point must be a critical point.
However, the case~\ref{cond:extremum} is also excluded because any
solution curve, being a level curve of $F(r,n)$ originally, cannot pass the critical point of extremum type.
Then we have only the case~\ref{cond:saddle} for the sonic point $(r_s,n_s)$ of the physically acceptable transonic flow $n=n(r)$.
As a consequence, we have the following theorem:
\begin{theorem}
\llaabel{theorem:sonic-critical}
For a physical transonic fluid flow which is stationary and spherically, planar or hyperbolically symmetric on the spacetime~(\ref{eq:metric}), its sonic point coincides with a critical point of saddle type on the phase space.
\end{theorem}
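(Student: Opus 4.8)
The plan is to follow the chain of implications already sketched in the text: convert the sonic condition into an algebraic condition on the master function $F$, deduce that a sonic point of a physical flow must be a critical point, and then exclude the extremum case using the dynamical-systems classification of Sec.~\ref{sec:types}. First I would write down the squared 3-velocity measured by the static observer $u_o=f^{-1/2}\partial_t$, obtaining $v^2=\mu^2/(\mu^2+fr^{2(D-2)}n^2)$ from the normalization $g_{ab}u^au^b=-1$ together with the conserved particle flux Eq.~(\ref{eq:numberflux}). Next I would differentiate $F(r,n)=h^2(n)[f(r)+\mu^2/(r^{2(D-2)}n^2)]$ with respect to $n$ and factor the result so that the condition $\partial_n F=0$ becomes manifestly equivalent to $v_s^2(n)=\mu^2/(\mu^2+fr^{2(D-2)}n^2)$, where $v_s^2:=\partial\ln h/\partial\ln n$. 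The content of this step is the observation that, on the locus $\partial_n F=0$, the expressions for $v^2$ and for $v_s^2$ are literally the same function of $(r,n)$.

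The second step promotes a sonic point to a critical point. By Definition~\ref{definition:sonicpoint}, at $(r_s,n_s)$ we have $v^2=v_s^2$, hence $v_s^2(n(r_s))=\mu^2/(\mu^2+fr^{2(D-2)}n^2(r_s))$, which is exactly the condition forcing $\partial_n F(r_s,n_s)=0$. To obtain $\partial_r F(r_s,n_s)=0$ as well I would argue by contradiction using the physicality hypothesis: from the Hamiltonian form Eq.~(\ref{eq:masterode}) any solution curve obeys $dn/dr=-\partial_r F/\partial_n F$, so if $\partial_r F(r_s,n_s)\neq0$ while $\partial_n F(r_s,n_s)=0$ the density gradient diverges, and $n=n(r)$ is then either multivalued near $r_s$ or develops a transonic shock; both are excluded by the standing requirement that a physical flow have $|dn/dr|<\infty$. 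Therefore $(r_s,n_s)$ satisfies both Eqs.~(\ref{eq:criticaln})--(\ref{eq:criticalr}) and is a critical point.

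Finally I would rule out the extremum type. By the linearization analysis of Sec.~\ref{sec:types}, a critical point of extremum type is surrounded by closed level curves of $F$, so no solution orbit reaches it; since the transonic flow $n=n(r)$ is itself a level curve of $F$ passing through $(r_s,n_s)$, that point cannot be an extremum, and the classification of Sec.~\ref{sec:types} leaves only the saddle type, which proves the theorem. I expect the main subtlety to be not any individual computation but the careful bookkeeping of the physicality conditions --- finite density gradient, single-valuedness of $n(r)$, exclusion of shocks --- that is needed to upgrade ``the sonic point satisfies $\partial_n F=0$'' to ``the sonic point is a saddle-type critical point''; the algebraic identity relating $v^2$, $v_s^2$ and $\partial_n F$, as well as the dynamical-systems classification, are already available from the preceding sections.
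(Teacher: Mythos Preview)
Your proposal is correct and follows essentially the same route as the paper: compute $v^2$ for the static observer, show that $\partial_n F=0$ is equivalent to $v_s^2=\mu^2/(\mu^2+fr^{2(D-2)}n^2)$, use the sonic condition $v^2=v_s^2$ to conclude $\partial_n F(r_s,n_s)=0$, exclude $\partial_r F(r_s,n_s)\neq0$ via the diverging density gradient $dn/dr=-\partial_r F/\partial_n F$ and the physicality requirement $|dn/dr|<\infty$ (ruling out multivaluedness and transonic shocks), and finally exclude the extremum case because level curves of $F$ near an extremum are closed and cannot be traversed by the solution curve $n=n(r)$. The bookkeeping of physicality conditions that you flag as the main subtlety is handled in the paper in exactly the same way.
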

We can interpret a critical point of saddle type as a sonic point of some transonic flow which is physically acceptable at least in the vicinity of the point on the phase space $(r,n)$.

\section{SP/PS correspondence}
\llaabel{sec:correspondence}
In this section, we analyze a critical point of radiation fluid flow and, as the main result of the current paper, prove Theorem~\ref{theorem:correspondence}.
\if0
\begin{theorem}
\llaabel{theorem:correspondence}
For any physical transonic flow of ideal photon gas in stationary and spherically, planar or hyperbolically symmetric state on the spacetime~(\ref{eq:metric}), the radius of its sonic point coincides with that of (one of) the unstable constant-$r$ photon surface(s).
\end{theorem}
\fi
\subsection{Critical point of radiation fluid flow}
We derived the EOS of radiation (ideal photon gas) in arbitrary spatial dimensions $d$ in~\cite{koga}.
For our purpose, it is sufficient to know that the enthalpy can be written in the form,
\begin{equation}
h(n)=(const.)\times n^{\gamma-1},
\end{equation}
where the index $\gamma$ is related to the dimension by $\gamma=(d+1)/d$.
The sound speed $v_s^2(n)$ is then computed as
\begin{equation}
\llaabel{vs-radiation}
v_s^{2}(n):=\frac{\partial \ln h}{\partial \ln n}=\gamma-1=\frac{1}{d}=\frac{1}{D-1}.
\end{equation}
\par
For the conditions of the critical point $(r_c,n_c)$ and its classification for radiation flow, we have the following lemma:
\begin{lemma}
\llaabel{lemma:photongascritical}
For radiation fluid flow in our accretion problem Eq.~(\ref{eq:masterequation}), radius $r_c$ of a critical point is specified by
\begin{equation}
\llaabel{eq:criticalradiusradiation}
(fr^{-2})'=0
\end{equation}
and the corresponding critical density $n_c$ is
\begin{equation}
\llaabel{eq:criticalnumberradiation}
n_c=\left|\mu\right|\sqrt{\frac{2(D-2)}{r_c^{2D-3}f'_c}}.
\end{equation}
The type of the critical point is classified by the inequality,
\begin{equation}
\llaabel{eq:classificationradiation}
saddle\ (extremum)\ point \Leftrightarrow \left(fr^{-2}\right)''_{r=r_c}<0\ (>0).
\end{equation}
\end{lemma}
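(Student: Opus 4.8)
The plan is to specialise the general critical-point analysis of Section~\ref{sec:accretion} to the radiation equation of state, whose defining feature, by Eq.~(\ref{vs-radiation}), is that the sound speed $v_s^2=1/(D-1)$ is a constant independent of $n$. One of the three assertions is then immediate: the formula for the critical density, Eq.~(\ref{eq:criticalnumberradiation}), is nothing but the general relation $n_c=\bar{n}(r_c)$ of Eq.~(\ref{eq:criticalcondition}) written out with $\bar{n}$ as defined in Section~\ref{sec:accretion}. So the actual work splits into two pieces: (i) reducing the condition $\mathcal{F}(r_c)=0$ to Eq.~(\ref{eq:criticalradiusradiation}), and (ii) expressing the sign of $\mathcal{F}'(r_c)$, which governs the classification through Eq.~(\ref{eq:saddleextremum}), in terms of $(fr^{-2})''$.

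For step (i): with $v_s^2$ constant, $\mathcal{F}(r)=\frac{1}{D-1}\left[1+2(D-2)a(r)\right]-1$, so $\mathcal{F}(r_c)=0$ collapses to $a(r_c)=1/2$, i.e.\ $r_cf'(r_c)=2f(r_c)$. Since $(fr^{-2})'=r^{-3}(rf'-2f)$, this is precisely $(fr^{-2})'|_{r=r_c}=0$, which is Eq.~(\ref{eq:criticalradiusradiation}); comparison with Eq.~(\ref{eq:const-r-psf}) already exhibits the coincidence that drives Theorem~\ref{theorem:correspondence}. As a byproduct, $f'_c=2f_c/r_c>0$, so $\bar{n}(r_c)$ and $a(r_c)$ are well defined at the critical radius.

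For step (ii): again by constancy of $v_s^2$, $\mathcal{F}'(r)=\frac{2(D-2)}{D-1}\,a'(r)$, hence $\mathrm{sign}\,\mathcal{F}'(r_c)=\mathrm{sign}\,a'(r_c)$. Differentiating $a=f/(rf')$ gives $a'(r)=\left[f'(rf'-f)-rff''\right]/(rf')^2$; evaluating at $r_c$ with the critical-point relation $r_cf'_c=2f_c$ (so $r_cf'_c-f_c=f_c$) yields $a'(r_c)=(f'_c-r_cf''_c)/(4f_c)$. On the other hand, differentiating $(fr^{-2})'=r^{-3}(rf'-2f)$ once more and using $r_cf'_c-2f_c=0$ gives $(fr^{-2})''|_{r=r_c}=r_c^{-3}(r_cf''_c-f'_c)$. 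Therefore $a'(r_c)=-\frac{r_c^{3}}{4f_c}(fr^{-2})''|_{r=r_c}$, and since $f_c>0$ and $r_c>0$ the two quantities have opposite signs. Feeding this into Eq.~(\ref{eq:saddleextremum}) gives: saddle $\Leftrightarrow\mathcal{F}'(r_c)>0\Leftrightarrow(fr^{-2})''|_{r=r_c}<0$, and extremum $\Leftrightarrow(fr^{-2})''|_{r=r_c}>0$, which is Eq.~(\ref{eq:classificationradiation}).

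The whole argument is essentially mechanical. The only point that requires care is the bookkeeping of signs along the chain $\mathcal{F}'(r_c)\to a'(r_c)\to(fr^{-2})''|_{r=r_c}$, together with the systematic substitution of the critical-point identity $r_cf'_c=2f_c$ into the second-derivative expressions — without it the formula for $a'(r_c)$ does not simplify to the clean form above and the equivalence would be obscured. I do not anticipate any genuine obstacle beyond this.
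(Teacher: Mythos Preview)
Your proposal is correct and follows essentially the same approach as the paper: substitute the radiation sound speed $v_s^2=1/(D-1)$ into $\mathcal{F}$ and into the classification criterion~(\ref{eq:saddleextremum}), then identify the result with $(fr^{-2})'$ and $(fr^{-2})''$. The only cosmetic difference is that the paper rewrites $\mathcal{F}$ from the outset as $\mathcal{F}=-\frac{D-2}{D-1}\,\frac{(fr^{-2})'}{f'r^{-2}}$, so that both Eq.~(\ref{eq:criticalradiusradiation}) and, via the quotient rule with vanishing numerator, Eq.~(\ref{eq:classificationradiation}) drop out in one line, whereas you route the same computation through the intermediate condition $a(r_c)=1/2$ and an explicit evaluation of $a'(r_c)$; the content and the sign bookkeeping are identical.
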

\begin{proof}
Substituting the sound speed of radiation fluid, Eq.~(\ref{vs-radiation}), into Eq.~(\ref{eq:criticalcondition}), the condition for the critical radius $r_c$ is given by
\begin{equation}
\mathcal{F}=-\frac{D-2}{D-1}\frac{1}{f'r^{-2}}\left(fr^{-2}\right)'=0.
\end{equation}
Therefore, the critical radius is specified by Eq.~(\ref{eq:criticalradiusradiation}).
Once the radius $r_c$ is obtained, we get the corresponding number density $n_c$ from Eq.~(\ref{eq:criticalcondition}) which gives Eq.~(\ref{eq:criticalnumberradiation}).
With the use of Eq.~(\ref{eq:criticalradiusradiation}), the left-hand side (LHS) of the classification condition of a critical point, Eq.~(\ref{eq:saddleextremum}), is written as
\begin{equation}
\mathcal{F}'(r_c)=-\frac{D-2}{D-1}\frac{1}{f'r^{-2}}\left(fr^{-2}\right)''
\end{equation}
and we immediately obtain Eq.~(\ref{eq:classificationradiation}).
Note that $f’=2f/r>0$ at $r=r_c$ from Eq.~(\ref{eq:criticalradiusradiation}).
\end{proof}
\subsection{Proof of theorem: SP/PS correspondence}
The conditions of the critical radius $r_c$ and its classification in Lemma coincide with the conditions of the radius of the constant-$r$ photon surface and its stability in Proposition~\ref{proposition:const-r-psf} and~\ref{proposition:stability}, respectively.
Then we immediately obtain the following proposition about the correspondence between constant-$r$ photon surfaces and critical points of radiation fluid flow:
\begin{proposition}
\llaabel{proposition:cp-ps}
A critical point of radiation fluid flow in our accretion problem~(\ref{eq:masterequation}) exists at a radius $r$ if and only if the spacetime Eq.~(\ref{eq:metric}) has a constant-$r$ photon surface $S_r$ at the radius.
If the constant-$r$ photon surface $S_r$ is stable, the critical point is of a saddle type while if unstable, it is of an extremum type.
\end{proposition}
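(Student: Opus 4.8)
The plan is to prove both halves of the proposition by a direct comparison of the algebraic characterizations already established, so that essentially no new computation is required beyond aligning the conditions of Sections~\ref{sec:psf-const-r}--\ref{sec:stability} with those of Lemma~\ref{lemma:photongascritical}.

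First I would dispose of the existence (``if and only if'') statement. By Lemma~\ref{lemma:photongascritical}, the radius $r_c$ of a critical point of radiation fluid flow in the master equation~(\ref{eq:masterequation}) is characterized by the single condition~(\ref{eq:criticalradiusradiation}), namely $(fr^{-2})'=0$, and whenever this holds the critical density $n_c$ is fixed uniquely by~(\ref{eq:criticalnumberradiation}). On the other hand, Proposition~\ref{proposition:const-r-psf} states that the constant-$r$ hypersurface $S_r$ is a photon surface precisely when condition~(\ref{eq:const-r-psf}), again $(fr^{-2})'=0$, holds at $r$. Since these two conditions are literally identical, a critical point of radiation fluid flow exists at a radius $r$ if and only if $S_r$ is a constant-$r$ photon surface there, which settles the first assertion.

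Next I would treat the stability--type correspondence. The key observation is that, at any radius where $(fr^{-2})'=0$, both the stability of the photon surface and the type of the associated critical point are controlled by the sign of one and the same quantity, the second derivative $(fr^{-2})''$ evaluated there. Proposition~\ref{proposition:stability} delivers, through~(\ref{eq:stability}), the stability of $S_r$ in terms of the sign of $(fr^{-2})''$, while Lemma~\ref{lemma:photongascritical} delivers, through~(\ref{eq:classificationradiation}), the saddle/extremum type of the critical point in terms of the sign of the same $(fr^{-2})''$. Reading off these two sign criteria at the common radius would then pair a stable constant-$r$ photon surface with a saddle-type critical point and an unstable one with an extremum-type critical point, exactly as the proposition asserts.

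The hard part will be the careful bookkeeping of the sign conventions entering~(\ref{eq:stability}) and~(\ref{eq:classificationradiation}). I would need to track consistently how the effective-potential curvature $V''$ behind Proposition~\ref{proposition:stability} and the derivative $\mathcal{F}'(r_c)$ behind Lemma~\ref{lemma:photongascritical} each reduce to a definite sign of $(fr^{-2})''$, so that the two sign dictionaries are aligned and the resulting pairing of photon-surface stability with critical-point type emerges as stated. Once this alignment is pinned down the proposition follows immediately, with no residual computation.
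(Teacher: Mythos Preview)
Your approach is exactly the paper's: the paper simply notes that the conditions in the Lemma coincide with those in Propositions~\ref{proposition:const-r-psf} and~\ref{proposition:stability} and declares the result immediate, with no further computation.

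One warning on the sign bookkeeping you rightly flagged as the delicate step: carrying it through does \emph{not} yield the pairing the proposition literally states. By~(\ref{eq:stability}) a stable photon surface has $(fr^{-2})''>0$, whereas by~(\ref{eq:classificationradiation}) a saddle-type critical point has $(fr^{-2})''<0$; hence stable $\leftrightarrow$ extremum and unstable $\leftrightarrow$ saddle, the reverse of the proposition's second sentence. This is evidently a misprint in the proposition itself: Theorem~\ref{theorem:correspondence}, which is deduced from it together with Theorem~\ref{theorem:sonic-critical}, requires precisely unstable photon surface $\leftrightarrow$ saddle critical point $\leftrightarrow$ sonic point. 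So your final alignment step, once actually executed, would expose and correct the typo rather than confirm the assertion as printed.
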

There is a one-to-one correspondence between critical points of radiation fluid flow and constant-$r$ photon surfaces.
It is worth noting that if the spacetime has more than one constant-$r$ photon surfaces, the stable and unstable photon surfaces appear alternately as we can see from Eqs.~(\ref{eq:criticalradiusradiation}) and~(\ref{eq:classificationradiation}).
The fact also leads to the alternate appearance of the corresponding extremum and saddle points on the phase space $(r,n)$.
\par
Then Theorem~\ref{theorem:sonic-critical} and Proposition~\ref{proposition:cp-ps} above immediately prove Theorem~\ref{theorem:correspondence}.
\par
For a sonic point $(r_c,n_c)$, the surface of the constant radius $r_c$ is sometimes referred to as {\it sonic surface}.
In this view point, Theorem~\ref{theorem:correspondence} also states that the sonic surface coincides with (one of) the photon surface(s).

\section{Conclusion}
\llaabel{sec:conclusion}
We investigated photon surfaces of constant radius in the spacetime~(\ref{eq:metric}) and defined its stability in the same way as we defined the stability of a photon sphere in~\cite{koga}.
In spite of the different spatial geometries, spherical, planar and hyperbolic symmetry, their conditions in Propositions~\ref{proposition:const-r-psf} and~\ref{proposition:stability} were found to be exactly the same.
In other words, they are independent of the function $s(\chi)$ which depends on the spatial symmetry of the spacetime according to Eq.~(\ref{eq:maxsym2space}).
This fact comes from that a photon surface is a structure of spacetime characterized by its second fundamental form and, in the cases we investigated, the second fundamental forms of the hypersurfaces take the same form irrelevant to the spatial symmetry in the tetrad system.
\par
We formulated the accretion problem of stationary radial fluid flow which is also spatially symmetric depending on the spatial symmetry of the spacetime, Eq.~(\ref{eq:maxsym2space}).
It was revealed that the master equation~(\ref{eq:masterequation}) does not depend on the spatial symmetry explicitly.
Therefore we applied the dynamical system analysis to the problem and obtained the same results about the critical points and the sonic points as the spherical case~\cite{koga}.
\par
Together with the results of the photon surfaces and the sonic points, we proved the main theorem of the current paper, Theorem~\ref{theorem:correspondence}, which states the correspondence between sonic points of radiation fluid flow and {\it photon surfaces}.
This is the extension of the theorem for SP/PS correspondence in spherically symmetric spacetime~\cite{koga} to non-spherically symmetric spacetime of the same degrees of symmetry.
\par
The main theorem has many implications which answers our questions about SP/PS correspondence.
In the previous papers~\cite{koga}~\cite{koga2}, we found that there always exits a correspondence between sonic points and photon spheres in the spherically symmetric spacetime for the radial and rotational flow.
However, we did not know which the aspects of photon spheres are responsible to the correspondence.
In this paper, from the correspondence between sonic points and {\it photon surfaces}, we can conclude that the umbilicity of the hypersurfaces plays the most important role, or possibly, is needed in the correspondence.
Furthermore, the sphericity of the surface, the positivity of the intrinsic curvature and the closed spatial topology are not necessary for the correspondence.
We can also infer that the correspondence will occur caused by the local geometrical structure and the extrinsic structure of the hypersurfaces rather than the global structure and the intrinsic structure.
Since the correspondence seems to originate from the microscopic construction of radiation fluid, our new result may suggest that transonic fluid flows generally have their sonic points only at points where the geometry has some special structures and the special structures are characterized by the geodesic motions of the particles which constitute the fluid.
\par
Since sonic points, or sonic surfaces, coincide with photon surfaces rather than photon spheres, it would be better to read ``SP/PS correspondence" as ``Sonic point/{\it Photon surface} correspondence".

\begin{acknowledgments}
The author thanks F. C. Mena, T. Harada, S. Jahns, Y. Nakayama, T. Igata, K. Ogasawara, T. Kokubu and K. Nakashi for their very helpful discussions and comments.
This work was supported by Rikkyo University Special Fund for Research.
\end{acknowledgments}

\appendix
\section{Examples of constant-$r$ photon surface}
\llaabel{app:examples}
We see the examples of constant-$r$ photon surface in the following.
\subsection{Schwarzschild spacetime}
Schwarzschild spacetime is given by the condition,
\begin{equation}
f(r)=g^{-1}(r)=1-\frac{2M}{r},\;s(\chi)=\sin\chi,\;D=4,
\end{equation}
for the metric Eq.~(\ref{eq:metric}).
From Proposition~\ref{proposition:const-r-psf}, there exists a unique photon surface (sphere) in Schwarzschild spacetime and it is located on $r=3M$.
From Proposition~\ref{proposition:stability}, the stability condition at the radius,
\begin{equation}
(fr^{-2})''|_{r=3M}=-\frac{2}{81M^4}<0,
\end{equation}
implies the photon surface is unstable.
This is the well-known result.
\subsection{C-metric}
Consider the specific case of C-metric given by
\begin{equation}
\llaabel{eq:c-metric}
f(r)=g^{-1}(r)=-\frac{k}{l^2}r^2+b-\frac{2m}{r}+\frac{q^2}{r^2},\;D=4
\end{equation}
where 
$0<r<\infty$.
Without loss of generality, the magnitude of the parameters $b$ and $k$ can be set to $1$ if they are nonzero.
From Proposition~\ref{proposition:const-r-psf}, the umbilicity condition of a timelike hypersurface $S_r$ is given by
\begin{equation}
\llaabel{eq:c-psf}
br^2-3mr+2q^2=0
\end{equation}
for $r$ satisfying $f(r)>0$.
From Proposition~\ref{proposition:stability}, its stability is determined by
\begin{equation}
\llaabel{eq:c-stability}
stable(unstable)\Leftrightarrow2br_{ph}-3m<0\;\left(>0\right)
\end{equation}
where $r_{ph}$ is the radius of the const-$r$ photon surface.
Note that for $r=r_{ph}$ surface to be a timelike photon surface, it must satisfy $f(r_{ph})>0$, i.e. it must be outside the horizons.
\par
Vacuum Einstein-Maxwell equation implies the following three cases in the presence of the cosmological constant~\cite{mann}:
\begin{equation}
\label{eq:c-metricCases}
\left\{
 \begin{array}{l}
 b=+1,\ k=\pm1,\  s(\chi)=\sin\chi\\
 b=0,\ k=-1,\ s(\chi)=1\\
 b=-1,\ k=-1,\ s(\chi)=\sinh\chi.
 \end{array}
\right.
\end{equation}
$k$ and $l^2$ are related to the cosmological constant $\Lambda$ by $\Lambda=3k/l^2$.
$q^2$ is the charge with the corresponding gauge field,
\begin{equation}
F_M=qs(\chi)d\chi\land d\phi,\ F_E=-\frac{q}{r^2}dt\land dr.
\end{equation}
This is the solution of vacuum Einstein-Maxwell equation with the cosmological constant for the metric ansatz, Eq.~(\ref{eq:metric}).
\par
In the spherical case $b=1$ and $k=\pm1$, Eq.~(\ref{eq:c-psf}) has solutions for $0<r<\infty$ only if $9m^2-8q^2\ge0$ and $m>0$.
They are
\begin{equation}
r_{ph\pm}=\frac{3m\pm\sqrt{9m^2-8q^2}}{2}.
\end{equation}
The outer one is unstable and the inner one is stable from Eq.~(\ref{eq:c-stability}).
When the equality holds, $9m^2-8q^2=0$, the two photon surfaces coincide, $r_{ph+}=r_{ph-}$.
The marginally case implies that the circular orbits are on the inflection point of their potentials $V(r)$, i.e. $V'(r_{ph})=V''(r_{ph})=0$.
As mentioned above, the radii must be outside the horizons so that the hypersurfaces of $r=r_{ph\pm}$ are photon surfaces.
Consider $k=0$ case for example.
The spacetime is then Reissner–Nordstr\"{o}m spacetime.
If the spacetime is over extremal ($q^2>m^2$), there are no horizons and the surfaces of $r=r_{ph\pm}$ are indeed photon surfaces.
If the spacetime is sub extremal ($q^2<m^2$), we can see that $f(r_{ph+})>0$ and $f(r_{ph-})<0$ from short calculations provided $9m^2-8q^2\ge0$.
Therefore, only the outer radius $r_{ph+}$ is a timelike photon surface and the radii have relation $r_{h-}<r_{ph-}<r_{h+}<r_{ph+}$ where $r_{h\pm}$ are the outer and inner horizons, respectively.
\par
In the planar case $b=0$ and $k=-1$, Eq.~(\ref{eq:c-psf}) has a solution only if $m>0$.
The radius is uniquely given by
\begin{equation}
r_{ph}=\frac{2q^2}{3m}
\end{equation}
and the photon surface is stable from Eq.~(\ref{eq:c-stability}).
If the spacetime has no horizons, the radius gives a photon surface.
However, if there are horizons, we can see that the radius is inside the horizons as follows:
If there exists some radius $r$ such that $\tilde{f}(r):=f(r)r^2\le0$, the spacetime has horizons and otherwise no horizons because $\lim_{r\to0}\tilde{f}(r)>0$ and $\lim_{r\to\infty}\tilde{f}(r)>0$.
The polynomial $\tilde{f}(r)$ has the minimum value $\tilde{f}_{min}=\tilde{f}(r_{min})$ at $r=r_{min}=(l^2m/2)^{1/3}$.
The horizons exist if and only if $\tilde{f}_{min}\le0$ and calculation of $\tilde{f}_{min}$ reveals that this is equivalent to the condition $q^6/l^2m^4\le27/16$.
Using the fact, we have $\tilde{f}(r_{ph})=16q^2/81\left(q^6/l^2m^4-27/16\right)\le0$ if the horizons exist.
Therefore $f(r_{ph})\le0$ and there are no constant-$r$ photon surface outside the horizons if the spacetime has horizons.
\par
In the hyperbolic case $b=-1$ and $k=-1$, only the allowed solution of Eq.~(\ref{eq:c-psf}) for $0<r<\infty$ is
\begin{equation}
r_{ph}=\frac{\sqrt{9m^2+8q^2}-3m}{2}.
\end{equation}
The stability condition Eq.~(\ref{eq:c-stability}) implies the corresponding photon surface is stable irrelevant to whether $m$ is negative or positive.
If $m\ne0$ and $q^2>0$, the spacetime has a critical value $l_c^2$ such that there are no horizons, $f(r)>0\;\forall r$, for $l^2<l_c^2$ because $\left.f(r)\right|_{k=0}$ is bounded below for $0<r<\infty$.
Therefore the surface of $r=r_{ph}$ is a timelike photon surface if, at least, $l^2<l_c^2$.
In the limit $l^2\to\infty$, or $k=0$ case, we can prove that $f(r_{ph})<0$ from straightforward calculations provided $m\ne0$ and $q^2>0$.
There are no constant-$r$ photon surfaces in that case.
\par
Vacuum Einstein-Maxwell equation with the cosmological constant admits, at most, one stable and one unstable  constant-$r$ photon surface.
The unstable one can exist only in the spherical case.
In the absence of the charge, $q^2=0$, there is no stable constant-$r$ photon surface in all the cases of the symmetry.
\par
The paper for the detailed analysis about the existence of constant-$r$ photon surfaces is in preparation~\cite{koga:const-r-psf}.

\section{Christoffel symbol}
\llaabel{app:christoffel}
We used the components of Christoffel symbol below in the calculations in Sec.~\ref{sec:psf-const-r}.
\begin{eqnarray}
{\Gamma^r}_{tt}&=&\frac{1}{2}g^{rr}\left(2g_{rt,t}-g_{tt,r}\right)=\frac{1}{2}g^{-1}f'\\
{\Gamma^r}_{t\chi}&=&\frac{1}{2}g^{rr}\left(g_{rt,\chi}+g_{r\chi,t}-g_{t\chi,r}\right)=0\\
{\Gamma^r}_{t\theta_1}&=&\frac{1}{2}g^{rr}\left(g_{rt,\theta_1}+g_{r\theta_1,t}-g_{t\theta_1,r}\right)=0\\
{\Gamma^r}_{\chi\chi}&=&\frac{1}{2}g^{rr}\left(2g_{r\chi,\chi}-g_{\chi\chi,r}\right)=-g^{-1}r\\
{\Gamma^r}_{\chi\theta_1}&=&\frac{1}{2}g^{rr}\left(g_{r\chi,\theta_1}+g_{r\theta_1,\chi}-g_{\chi\theta_1,r}\right)=0\\
{\Gamma^r}_{\theta_1\theta_1}&=&\frac{1}{2}g^{rr}\left(2g_{r\theta_1,\theta_1}-g_{\theta_1\theta_1,r}\right)=-g^{-1}s^2r
\end{eqnarray}

%
\bibliography{const_r_psf_correspondence}

\end{document}